\documentclass[11pt]{article}
\usepackage[letterpaper,margin=1in]{geometry}

\usepackage[utf8]{inputenc}

\usepackage{amsmath,amssymb,amsthm,amsfonts}
\usepackage{mathtools}
\usepackage{graphicx} 
\usepackage{float}
\usepackage{subfigure} 
\usepackage{arydshln}
\usepackage{multirow}
\usepackage{authblk,cite}
\usepackage{bm} 
\usepackage{mathrsfs}   
\usepackage{setspace}
\usepackage{url}
\usepackage{hyperref}
\hypersetup{colorlinks,linkcolor=blue,urlcolor=blue,citecolor=blue}
\usepackage{url}
\usepackage{xcolor}
\usepackage[title]{appendix}
\usepackage{qcircuit}

\usepackage{tikz}  
\usepackage{tikz-cd}  
\usetikzlibrary{positioning}
\usepackage{tikz-qtree}
\usepackage{mathdots}
\usepackage{authblk}

\usepackage{qcircuit}
\newcommand{\ket}[1]{| #1 \rangle}
\newcommand{\bra}[1]{\langle #1|}

\newcommand{\ketbra}[2]{| #1 \rangle \langle #2 |}

\usepackage{subfigure}
\usepackage{subcaption}

\DeclareMathOperator{\poly}{poly}

\DeclareMathOperator{\tr}{Tr}

\DeclareMathOperator{\sgn}{sgn}

\renewcommand{\Re}{\operatorname{Re}}

\def\x{\boldsymbol{x}}

\allowdisplaybreaks

\newtheorem{thm}{Theorem}[section]
\newtheorem{prop}[thm]{Proposition}

\newtheorem{conj}[thm]{Conjecture}
\newtheorem{prob}[thm]{Problem}

\theoremstyle{definition}

\newtheorem{definition}[thm]{Definition}
\newtheorem{rmk}[thm]{Remark}

\numberwithin{equation}{section}

\newtheorem*{thm*}{Theorem}
\newtheorem*{lem*}{Lemma}
\newtheorem*{prop*}{Proposition}
\newtheorem*{defn*}{Definition}
\newtheorem*{prob*}{Problem}
\newtheorem*{ques*}{Question}

\newcommand{\be}{\begin{equation}}
\newcommand{\ee}{\end{equation}}
\newcommand{\bes}{\begin{equation*}}
\newcommand{\ees}{\end{equation*}}
\newcommand{\bea}{\begin{eqnarray}}
\newcommand{\eea}{\end{eqnarray}}
\newcommand{\beas}{\begin{eqnarray*}}
\newcommand{\eeas}{\end{eqnarray*}}
\newcommand{\bal}{\begin{aligned}}
\newcommand{\eal}{\end{aligned}}

\usepackage[title]{appendix}

\title{ DQC1-completeness of normalized trace estimation for functions of log-local Hamiltonians }

\author[1]{Zhengfeng Ji\thanks{\texttt{jizhengfeng@tsinghua.edu.cn}}} 
\author[2,3]{Tongyang Li\thanks{\texttt{tongyangli@pku.edu.cn}}}
\author[4]{Changpeng Shao\thanks{\texttt{changpeng.shao@amss.ac.cn}}}
\author[2,3]{Xinzhao Wang\thanks{\texttt{wangxz@stu.pku.edu.cn}}}
\author[4]{Yuxin Zhang\thanks{\texttt{zhangyuxin@amss.ac.cn}}}

\affil[1]{\small{Department of Computer Science and Technology, Tsinghua University, Beijing, China}}

\affil[2]{Center on Frontiers of Computing Studies, Peking University, Beijing,  China}
\affil[3]{School of Computer Science, Peking University, Beijing, China}
\affil[4]{\small{SKLMS, Academy of Mathematics and Systems Science, Chinese Academy of Sciences, Beijing, China}}

\date{\today}

\begin{document}

\maketitle

\begin{abstract}

We study the computational complexity of estimating the normalized trace $2^{-n}\tr[f(A)]$ for a log-local Hamiltonian $A$ acting on $n$ qubits. This problem arises naturally in the DQC1 model, yet its complexity is only understood for a limited class of functions $f(x)$.

We show that if $f(x)$ is a continuous function with approximate degree $\Omega(\mathrm{poly}(n))$, then estimating $2^{-n}\tr[f(A)]$ up to constant additive error is DQC1-complete, under a technical condition on the polynomial approximation error of $f(x)$. This condition holds for a broad class of functions, including exponentials, trigonometric functions, logarithms, and inverse-type functions. We further prove that when $A$ is sparse, the classical query complexity of this problem is exponential in the approximate degree. Together, these results identify the approximate degree as the key parameter governing the complexity of normalized trace estimation: it characterizes both the quantum complexity (via efficient DQC1 algorithms) and the classical hardness, yielding an exponential quantum–classical separation. Our proof develops a unified framework that cleanly combines circuit-to-Hamiltonian constructions, periodic Jacobi operators, and tools from polynomial approximation theory, including the Chebyshev equioscillation theorem.

\vspace{.5cm}

{\em Note added.
Recently, with the assistance of GPT-5.6 Sol, we have further relaxed this ``technical condition'' to the assumption that $f(x)$ is Lipschitz continuous, thereby matching the regularity condition required by the quantum algorithm proposed in~\cite{cade}; see~\cite{shao-gpt} for further details. Thus, this subsumes all previous partial results towards DQC1-hardness and, at the same time, essentially closes the gap between the known classical hardness results and the regularity assumptions required by the corresponding quantum algorithm.

}

\end{abstract}





\section{Introduction}

Introduced by Knill and Laflamme in 1998~\cite{knill1998power}, the {\em One-Clean-Qubit model}, also known as {\em Deterministic Quantum Computation with One Qubit (DQC1)}, is a restricted model of quantum computation in which the input consists of a single pure qubit tensored with a maximally mixed $n$-qubit state. 
The computation starts from
\be
\rho = \ket{0} \bra{0} \otimes \frac{I_n}{2^n},
\ee
followed by a polynomial-size unitary circuit, after which the first qubit is measured in the computational basis. 
Despite its limited purity, DQC1 can efficiently solve certain problems for which no efficient classical algorithms are known. 
Moreover, it is believed to be strictly weaker than $\mathrm{BQP}$, and cannot be efficiently simulated classically unless the polynomial hierarchy collapses to the third level~\cite{morimae2014hardness}. 
These features make DQC1 a natural framework for studying intermediate quantum computational power.

A central computational task in this model is estimating the normalized trace $\frac{1}{2^n}\tr[U]$ of a polynomial-size quantum circuit~\cite{knill1998power,shepherd2006computation}, which is DQC1-complete. 
A range of other DQC1-complete problems can be viewed as variants of this task, including approximating the Jones polynomial~\cite{shor2008estimating}, estimating partition functions $\frac{1}{2^n}\tr[e^{-\beta A}]$~\cite{brandao2008entanglement}, and computing spectral quantities such as $\frac{1}{2^n}\tr[A^{-1}]$, $\frac{1}{2^n}\tr[A^{d}]$~\cite{cade, edenhofer2025dequantization}, and $\frac{1}{2^n}\tr[\log(A)]$~\cite{shao-wang}, among others~\cite{wang2024new, cerezo2020variational, khatri2019quantum, moulik2024dqc1, chowdhury2021computing}. 
These examples suggest that normalized trace estimation lies at the core of DQC1 complexity.

This perspective naturally leads to a general formulation: estimating spectral sums of the form $\frac{1}{2^n}\tr[f(A)]$, where $f(x)$ is defined on the spectrum of $A$. 
This raises the following fundamental question:
\begin{quote}
{\em What property of a function $f(x)$ makes the estimation of $\frac{1}{2^n}\tr[f(A)]$ computationally hard in the DQC1 model?}
\end{quote}
Existing hardness results~\cite{brandao2008entanglement, cade, edenhofer2025dequantization, shao-wang, chowdhury2021computing} are largely tailored to specific choices of $f$, and a general characterization remains elusive.

Addressing this question is important from several perspectives. 
On the complexity-theoretic side, it identifies the key structural features of functions that govern DQC1-hardness, and moves beyond function-specific analyses toward a unified understanding based on a single parameter of $f(x)$. 
From a broader viewpoint, the DQC1 model captures the power of quantum computation with only a single clean qubit and is widely regarded as a minimal model exhibiting nontrivial quantum advantage; thus, understanding normalized trace estimation sheds light on the fundamental capabilities and limitations of this model. 
On the practical side, spectral sums of the form $\tr[f(A)]$ arise ubiquitously in numerical linear algebra and machine learning. 
For instance, the log-determinant $\log\det(A)=\tr[\log(A)]$ plays a central role in marginal likelihood estimation in kernel methods and in training Gaussian processes or Gaussian graphical models~\cite{dong2017scalable, rasmussen2010gaussian}. 
More broadly, such quantities underpin tasks including trace estimation, partition function approximation, and matrix function evaluation. 
Efficient classical~\cite{boutsidis2017randomized, han2017approximating, musco_et_al:LIPIcs.ITCS.2018.8, edenhofer2025dequantization} and quantum~\cite{luongo2020quantum, zhao2019quantum, giovannetti2025quantum, cade} algorithms have been developed for these problems. 
Consequently, understanding when normalized spectral sum estimation becomes computationally hard helps delineate the fundamental limits of efficient classical computation, as well as the boundary between classical and quantum advantage.

\subsection{Main results}

Regarding the above fundamental problem, we approach it from two viewpoints: (i) establishing a general result for DQC1-completeness; and (ii) proving a lower bound on the classical query complexity. From our results, we see that the approximate degree is a key parameter in determining the computational complexity of normalized trace estimation.

To start, we define the following problem.

\begin{prob}[Normalized trace estimation problem]
    Let $f(x):[a,b]\subseteq [-1,1]\rightarrow [-1,1]$ be a continuous function, and let $A$ be a log-local Hamiltonian acting on $n$ qubits. In the normalized trace estimation problem, we aim to decide whether 
    \be
    \frac{1}{2^n}\tr[f(A)] \geq \frac{2}{3}
    \quad \text{or} \quad
    \frac{1}{2^n}\tr[f(A)] \leq \frac{1}{3}.
    \ee
    Equivalently, we aim to approximate $\frac{1}{2^n}\tr[f(A)]$ up to an additive error of at most $1/6$.
\end{prob}

Let $f(x): [a,b] \subseteq [-1,1]\rightarrow \mathbb{R}$ be a continuous function. For any integer $d$, we define 
\be
E_d=\min_{\deg(P)\leq d} ~~ \max_{x\in[a,b]} \|f(x)-P(x)\|
\ee
to be the best polynomial approximation error of degree at most $d$ over $[a,b]$.

\begin{thm}[DQC1-completeness]
\label{thm:intro-main}
Assume that $f(x):[a,b] \subseteq [-1,1]\rightarrow [-1,1]$ is a continuous function. Suppose there exists a constant $\varepsilon<1$ such that $d:=\widetilde{\deg}_\varepsilon (f)=\Omega(\poly(n))$ and $E_{d}/E_{d-1}<1/2$.
Then the ``normalized trace estimation problem" is DQC1-complete.
\end{thm}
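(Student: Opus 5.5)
Membership in DQC1 and DQC1-hardness are handled separately.

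\emph{Membership.} Since $A$ is log-local, it has a block-encoding built from a polynomial number of gates. A degree-$d$ polynomial $P$ with $\max_{[a,b]}|f-P|\le \varepsilon'$ can be implemented by quantum singular value transformation; here $\varepsilon'$ is a small constant, and $d$ is polynomial as long as the approximate degree at small constant error is polynomial. Applying the one-clean-qubit Hadamard test to this block-encoding estimates $2^{-n}\tr[P(A)]$ to constant additive error. This gives $2^{-n}\tr[f(A)]$ to within $1/6$, which is the standard DQC1 algorithm for spectral sums (as in \cite{cade}). If $f$ has approximate degree larger than polynomial, we instead rely on the promise-problem formulation, in which the instance size controls the admissible degree.

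\emph{Hardness.} We reduce from estimating $2^{-m}\tr[U]$ for a polynomial-size circuit $U=U_T\cdots U_1$.

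\textbf{Step 1: clock Hamiltonian.} Use a Feynman--Kitaev-type circuit-to-Hamiltonian construction with a cyclic clock of length $L$. Let $U$ act on $m$ qubits and pad it with identities so that $L\approx d$. Then take
\[
A=\tfrac{1}{2}\sum_t \bigl(U_t\otimes|t+1\rangle\langle t|+\text{h.c.}\bigr).
\]
Encode the clock in $O(\log L)$ qubits so that $A$ is log-local. Decompose the work register into eigenvectors of $U$ with eigenvalues $e^{i\theta_j}$. After a gauge transformation, $A$ becomes a direct sum of periodic Jacobi (cyclic shift) operators twisted by the phase $e^{i\theta_j}$. The spectrum of the $j$-th block is
\[
\{\cos((\theta_j+2\pi k)/L): k=0,\dots,L-1\}.
\]
Affinely rescale $A$ so that this spectrum lies in $[a,b]$.

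\textbf{Step 2: trace formula.} For each $j$, $\sum_k f(\cos((\theta_j+2\pi k)/L))$ depends on $\theta_j$ only through the Fourier/Chebyshev coefficients of $g(\phi)=f(\cos\phi)$ at frequencies that are multiples of $L$. Concretely,
\[
\sum_k g\bigl((\theta+2\pi k)/L\bigr)=L\sum_{r}\hat g_{rL}\,e^{ir\theta}.
\]
Summing over $j$ shows that $2^{-n}\tr[f(A)]$ equals the constant $\hat g_0$ plus $\sum_{r\ge1}\hat g_{rL}\cdot 2\Re\, 2^{-m}\tr[U^r]$. To isolate $r=1$, a phase on the clock or a controlled version of $U$ lets us also obtain $\Im\tr U$.

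\textbf{Step 3 (main obstacle): the coefficients.} We must show that $|\hat g_L|$ is inverse-polynomially large and dominates the tail $\sum_{r\ge2}|\hat g_{rL}|$. Along the way, we may have to choose $L$ near $d$ and the affine rescaling so that the relevant Chebyshev coefficient is large. This is where approximate degree enters. Since $f$ is not $\varepsilon$-approximable by degree $d-1$, the equioscillation theorem gives $d+1$ alternation points for the best approximation error $E_{d-1}$. Combining this with $E_d/E_{d-1}<1/2$, a de la Vallée-Poussin/Chebyshev-series argument should show that the degree-$d$ component of $f$ carries error of size at least about $E_{d-1}/2$. Here the idea is that the Chebyshev coefficient $c_d$, or a windowed average of coefficients near $d$, satisfies $|c_d|\gtrsim E_{d-1}-E_d$. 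The tail $\sum_{k>2d}|c_k|$ should be bounded by $E_{2d}$-type quantities, which are smaller.

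I would try first to choose $L=d$ and bound $\sum_{r\ge2}|c_{rd}|$ using the monotonicity of $E_k$ together with the ratio condition. If that fails, I would repeat the circuit (taking $U^s$ and using trace-of-power tricks) so that the signal is amplified polynomially.

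Finally, a constant gap requires boosting from inverse-polynomial precision to constant $1/6$. For this I would use the standard DQC1 amplification: take many parallel copies of the clock construction and a product/averaging Hamiltonian, or rescale $f$'s argument through the gap promise. This step also needs care.
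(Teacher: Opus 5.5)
Your Steps 1--2 are sound. The uniform cyclic clock is the special case $a_t=0$, $b_t=\tfrac12$ of the paper's Hamiltonian, with discriminant $T_L$, and your aliasing identity is correct. The gap is Step 3, which carries the entire hardness argument and is left at ``should show''. Worse, the route you sketch does not follow from the hypotheses.

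\emph{Lower bound on $c_d$.} Comparing with $P_d^*$, the natural estimate is $|c_d|\ge E_{d-1}-E_d-|c_d(f-P_d^*)|\ge E_{d-1}-(1+4/\pi)E_d$. This is vacuous once $E_d/E_{d-1}\ge \pi/(\pi+4)\approx 0.44$, a range the hypothesis allows.

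\emph{Tail bound.} The aliased tail $\sum_{r\ge 2}|c_{rd}|$ is not controlled by best-approximation errors at all. Chebyshev series of continuous functions need not converge absolutely, and the hypotheses say nothing about $E_k$ for $k>d$ beyond $E_k\le E_d$.

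\emph{Fallbacks.} These are not arguments either. Replacing $U$ by $U^s$ does not amplify $\Re\tr[U]$, and $\tr[f(\cdot)]$ of a sum of independent copies does not factor for general $f$.

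\emph{Choice of $L$.} $L\approx d$ is not enough; you need $L=d$ exactly. Only the top Chebyshev coefficient of a degree-$d$ approximant is forced to be large, and for $L>d$ the polynomial part carries no $\theta$-dependence at all.

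The missing idea is the paper's bookkeeping. Do not expand $f$ coefficientwise; instead compare it with a single polynomial $g$ of degree $d$ satisfying $\|f-g\|_\infty\le\eta$, where $\eta\in[E_d,E_{d-1})$ (on $[-1,1]$ after your affine rescaling).
\begin{itemize}
\item The remainder contributes at most $d\eta$ to $\tr[f(A_\theta)]$ by the sup norm, so no tail ever appears.
\item For the polynomial part with $L=d$, one has $\tr[T_k(A_\theta)]=0$ for $0<k<d$ and $\tr[T_d(A_\theta)]=d\cos\theta$. Hence $\frac1d\tr[g(A_\theta)]=g_0+g_d\cos\theta$.
\item $|g_d|\ge E_{d-1}-\eta$, because $g-g_dT_d$ is a degree-$(d-1)$ competitor.
\end{itemize}
Summing over eigenphases gives
\[
\frac{\tr[f(A)]}{d\,2^n}=g_0+g_d\,\frac{\Re\tr[U]}{2^n}\pm\eta .
\]
With $\eta=E_d$, the ratio condition gives $\eta/(E_{d-1}-\eta)<1$. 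So constant-precision estimation of the left side yields $\Re\tr[U]/2^n$ to a constant precision below $1$. Since $E_{d-1}-\eta>E_{d-1}/2>\varepsilon/2$ is a constant, no amplification is needed.

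The paper obtains the same identity with a general periodic Jacobi matrix whose discriminant is $(g-P^*_{d-1})/(E_{d-1}-\eta)$. That matrix is constructed via Ferguson's inverse theorem and the equioscillation points (Theorem~\ref{thm:approxiamtion}). With Step 3 repaired as above, your uniform clock would make that inverse-spectral construction unnecessary.
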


We now briefly discuss the assumptions and implications of Theorem~\ref{thm:intro-main}.
In this theorem, $E_d$ denotes the best uniform polynomial approximation error of degree at most $d$ for $f(x)$ over the interval $[a,b]$; see \eqref{eq for Em} for its explicit definition.
Formal definitions of DQC1-completeness and approximate degree are given in Definitions~\ref{def:DQC1-complete} and~\ref{defn:approximate degree}.

\begin{itemize}
    \item First, the restriction that the spectrum of $A$ lies in $[-1,1]$ is without loss of generality, as we usually handle bounded/unitary operators on quantum computers. More generally, the theorem applies whenever the domain of $f(x)$ contains the spectrum of $A$. In addition, after an affine rescaling, the domain can be mapped to $[-1,1]$.

    \item Second, the log-locality assumption on $A$ is natural and necessary in the DQC1 setting. By definition, a log-local Hamiltonian is a sum of terms, each acting nontrivially on at most $O(\log n)$ qubits. This matches the computational power of DQC1, since DQC1 is equivalent to DQC$k$ for $k = O(\log n)$ clean qubits \cite{shor2008estimating}. In particular, quantum algorithms for estimating $\frac{1}{2^n}\tr[f(A)]$ in the DQC1 model typically require $O(\log n)$ ancillas when $A$ is log-local \cite{cade}.


    \item Third, the approximate-degree condition  $\widetilde{\deg}_\varepsilon(f) = \Omega(\poly(n))$ is natural and, although not stated explicitly, appears implicitly in prior works \cite{brandao2008entanglement, edenhofer2025dequantization, cade, shao-wang}. Indeed, since $f(x)$ can always be approximated by a polynomial of degree $\widetilde{\deg}_\varepsilon(f)$, a small approximate degree (e.g., $O(\log n)$) would allow $\tr[f(A)]$ to be computed efficiently by direct expansion, given that $A$ is log-local with ${\rm poly}(n)$ terms, e.g., see the classical algorithm given in \cite{edenhofer2025dequantization}. 

    \item
    Finally, the condition $E_d / E_{d-1} < 1/2$ arises in our analysis via a result from classical polynomial approximation theory (see Theorem~\ref{thm:approxiamtion}). It reflects a geometric separation between successive best-approximation errors and is satisfied by a broad class of functions commonly encountered in practice, including exponentials $e^{-\beta x}$, trigonometric function $\sin(tx), \cos(tx)$, logarithms $\log(1+\beta x)$, and inverse functions $1/\kappa x$ whenever the parameters  $\beta, \kappa $ are bounded away from degenerate regimes; see the discussion below Remark \ref{remark:key}. While this assumption is sufficient for our reduction, we do not expect it to be intrinsic. In particular, it is plausible that DQC1-completeness holds for all continuous functions within our framework. However, removing this condition would require new ideas for controlling the contribution of different degree components in polynomial approximations.

\end{itemize}

As another result, we study the hardness from the viewpoint of query complexity. We now assume that $A$ is an $s$-sparse Hermitian matrix, i.e., each row has at most $s$ nonzero entries. We also assume standard oracle access to $A$ via the following two oracles:
\begin{itemize}
    \item[(1)] Given indices $(i,j)$, oracle $\mathcal{O}_1$ returns the $(i,j)$-th entry of $A$;
    \item[(2)] Given an index $i$, for any $j \leq s$, oracle $\mathcal{O}_2$ returns the position of the $j$-th nonzero entry in the $i$-th row.
\end{itemize}

In the course of proving a classical query lower bound, we need a lower bound for an analogue of the $k$-Forrelation problem in the DQC1 query model. More precisely, we consider an extension of the $k$-Forrelation problem~\cite{aaronson2015forrelation} in the DQC1 query model, defined as follows, where $N = 2^n$:
\begin{equation}
\label{k-forr Unitary}
{\tt Trace}_k(\x) :=
\frac{1}{N} \tr\!\left[
O_{\x_1} H^{\otimes n}
O_{\x_2} H^{\otimes n}
\cdots
H^{\otimes n}
O_{\x_k} H^{\otimes n}
\right].
\end{equation}
Here $\x_1, \ldots, \x_k \in \{\pm 1\}^N$ and $O_{\x_1},  \ldots, O_{\x_k}$ are oracles. In this problem, we aim to estimate ${\tt Trace}_k(\x)$ up to constant additive accuracy. This quantity can be viewed as a natural DQC1 analogue of the $k$-Forrelation problem, where acceptance probabilities are replaced by normalized traces.

In the DQC1 query model, we can estimate ${\tt Trace}_k(\x)$ to constant accuracy using $k$ quantum queries. Regarding the classical hardness,  we have the following result.

\begin{prop}
\label{prop:trace DQC1 complete-intro}
Estimating ${\tt Trace}_{k}$ up to constant additive accuracy is DQC1-complete. 
\end{prop}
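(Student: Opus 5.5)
We prove two directions: membership of ${\tt Trace}_k$ in DQC1, and DQC1-hardness. In the hardness direction the oracles are replaced by efficiently computable Boolean functions $\x_i(z)=(-1)^{g_i(z)}$, so that $O_{\x_i}$ is a polynomial-size diagonal circuit. We take $k=\poly(n)$.

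\textbf{Membership.} The operator $U=O_{\x_1}H^{\otimes n}\cdots O_{\x_k}H^{\otimes n}$ is a product of $k$ oracle calls and $k$ layers of Hadamards. We apply the standard Hadamard-test circuit of Knill and Laflamme to $U$. The clean qubit is put in $\ket{+}$ and controls $U$ on the maximally mixed register. Controlled-$O_{\x_i}$ costs one query, implemented with a phase-kickback ancilla; $O(\log n)$ clean ancillas are allowed since DQC1 $=$ DQC$k$ for $k=O(\log n)$. Measuring the clean qubit in the $X$ or $Y$ basis gives a bias equal to $\Re$ or $\mathrm{Im}$ of $2^{-n}\tr[U]$. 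By Hoeffding, $O(1)$ repetitions give constant additive accuracy. This also shows the claimed $k$-query upper bound.

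\textbf{Hardness.} We start from the known DQC1-complete problem of estimating $2^{-n}\tr[V]$ for a polynomial-size circuit $V$. Without loss of generality, $V$ is over the gate set $\{H, \text{Toffoli}, \text{phase-type diagonal gates}\}$, or alternatively $\{H, CCZ\}$, which is universal up to real amplitudes. A $\pm1$-valued gate such as $Z$, $CZ$ or $CCZ$ is diagonal, so it is an oracle $O_{\x}$ with $\x(z)=(-1)^{g(z)}$ for an efficient $g$. The difficulty is that Hadamards in $V$ act on individual qubits, while ${\tt Trace}_k$ only allows full layers $H^{\otimes n}$. I would fix this with a layering step that writes $V$, up to equal trace, as an alternating product of diagonal $\pm1$ unitaries and $H^{\otimes n}$. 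Two pieces make this work:
\begin{itemize}
\item[(i)] Any classical reversible permutation gate (Toffoli, CNOT, $X$) can be written as $H^{\otimes n} D H^{\otimes n}$ with $D$ diagonal only in special cases. So instead I would use real universal gates that are diagonal, plus full Hadamard layers.
\item[(ii)] A single-qubit $H_j$ is obtained as $H^{\otimes n}$ with the unwanted Hadamards on the other qubits cancelled.
\end{itemize}
For (ii), I would conjugate by a doubled register. Consider the circuit on $2n$ qubits in which every gate acts on the first copy, while the global layer $H^{\otimes 2n}$ is sandwiched as $H^{\otimes 2n}\, D\, H^{\otimes 2n}$ with $D$ implementing controlled-phase patterns. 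A simpler route I would try first: note that $H_j = \frac{1}{\sqrt2}(X_j+Z_j)$, and that the pattern $H^{\otimes n} Z^{S} H^{\otimes n} = X^{S}$ is available. With this, every Clifford$+$CCZ circuit can be rewritten so that the only non-diagonal gates are products $X^S$, and $X^S = H^{\otimes n}Z^SH^{\otimes n}$ is of the allowed form. The remaining issue is realizing a basis-changing gate such as single-qubit $H$. Here I would use a Hadamard gadget: add an ancilla per Hadamard, use a $CZ$ between the target and a fresh ancilla, then apply a global Hadamard layer whose effect on the other registers is undone by a second layer. Since the ancillas are in the maximally mixed part, each gadget rescales the trace by a known constant factor. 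Choosing the gadgets so that this factor is $2^{-O(1)}$ per Hadamard would lose exponentially. To avoid that, I would instead encode through the paper's circuit-to-Hamiltonian/clock construction, or use the fact that DQC1 hardness needs only polynomial accuracy amplification.

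\textbf{Main obstacle.} The crux is expressing arbitrary DQC1 circuits in the rigid alternating form $O\,H^{\otimes n}\,O\,H^{\otimes n}\cdots$ without an exponentially small normalization factor. I expect the cleanest solution to be the observation that $H^{\otimes n}$ followed by diagonal $\pm1$ phases already generates a universal gate set: $H^{\otimes n}$ together with $CZ$ and $CCZ$ is universal for real quantum computation. For that, I would simulate an individual $H_j$ by a pair of global layers surrounding a diagonal oracle acting on a fresh ancilla register that resets the other qubits via phase kickback. The care needed is that this register must be maximally mixed rather than clean, so the gadget must act correctly on all computational basis states of the register. That verification is where I would spend the effort.
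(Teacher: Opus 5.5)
Your membership argument is fine; the paper treats it as immediate (a Hadamard test with $k$ controlled queries). The gap is in the hardness direction, which is the real content of the proposition. You never give a way to rewrite an arbitrary polynomial-size circuit in the rigid form $O_{\x_1}H^{\otimes n}O_{\x_2}H^{\otimes n}\cdots O_{\x_k}H^{\otimes n}$ while keeping its normalized trace up to a non-negligible factor, and you say yourself this is the crux. None of your routes closes it:
\begin{itemize}
\item Writing $H_j=\frac{1}{\sqrt2}(X_j+Z_j)$ gives a sum, not a product, and expanding all $h$ Hadamards yields $2^h$ terms.
\item Even granting your per-Hadamard ancilla gadget a constant factor each, the total is $2^{-\Omega(h)}$. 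For $h=\omega(\log n)$, polynomially many repetitions cannot resolve this.
\item The clock construction outputs a Hamiltonian, not a product of this form.
\end{itemize}
The gadget you plan to verify at the end cannot work at all. It consists of two global layers around a diagonal $D$, acting on the system and a fresh maximally mixed $a$-qubit register $A$, with $N=n+a$. For any diagonal $D$, $H^{\otimes N}DH^{\otimes N}$ commutes with every $X_i$, since $H^{\otimes N}X_iH^{\otimes N}=Z_i$ commutes with $D$. The ancilla Hadamards cancel under the partial trace, so
\[
2^{-a}\tr_A\bigl[H^{\otimes N}DH^{\otimes N}\bigr]=H^{\otimes n}\bigl(2^{-a}\tr_A[D]\bigr)H^{\otimes n},
\]
which again commutes with every $X_i$ on the system. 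But $H_jX_jH_j=Z_j\neq X_j$, so no such gadget equals $cH_j$ for any $c\neq 0$; the check you defer would fail. More generally, an average of contractions is unitary only if all of them coincide. Hence a fresh-ancilla gadget can give $cH_j$ with $|c|=1$ only if the ancilla decouples, so a loss-free reduction must be an exact operator identity.

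That identity is what the paper imports from Aaronson and Ambainis' proof that $k$-fold Forrelation is BQP-complete. For an $n$-qubit circuit $Q$ over $\{H,\mathrm{CCSIGN}\}$ with $m$ Hadamards, there are efficiently computable $f_1,\ldots,f_k:\{0,1\}^{n+1}\to\{\pm1\}$, with $k=O(m)$ and one dummy qubit, such that $H^{\otimes(n+1)}O_{f_1}H^{\otimes(n+1)}\cdots O_{f_k}H^{\otimes(n+1)}=Q\otimes I$. Exact identities of this kind are possible once several global layers are interleaved with entangling diagonals, e.g.\ $\mathrm{SWAP}=H^{\otimes2}\,\mathrm{CZ}\,H^{\otimes2}\,\mathrm{CZ}\,H^{\otimes2}\,\mathrm{CZ}$. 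Because it is an equality of unitaries, it holds on the maximally mixed input with no normalization loss. Cyclicity of the trace makes the two outer Hadamard layers cancel, and $O_{f_k}O_{f_1}=O_{f_kf_1}$, so $2^{-n}\tr[Q]={\tt Trace}_{k-1}(f_kf_1,f_2,\ldots,f_{k-1})$. (The construction as quoted can instead return $(Q\otimes I)(I_n\otimes H)$, whose trace is $0$ since $\tr[H]=0$, so one must arrange to land in the $Q\otimes I$ case.) Without this layering lemma, or a loss-free substitute that you actually prove, your proposal establishes only membership.
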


Concerning the lower bound on classical query complexity, we currently have only limited observations.
On one hand, let $U$ denote the unitary in~\eqref{k-forr Unitary}. Then ${\tt Trace}_k(\x) = \mathbb{E}_{i \in [N]} \big[ \langle i | U | i \rangle \big],$
i.e., it is the average of the diagonal entries of $U$. In contrast, the $k$-Forrelation problem corresponds to estimating a single matrix element such as $\langle 0^n | U | 0^n \rangle$, for which a lower bound of $\widetilde{\Omega}(N^{1-1/k})$ is known~\cite{bansal2021k}. 
While ${\tt Trace}_k$ involves an average over $N$ such terms, these quantities are highly correlated through the shared oracle access, and it is unclear how to exploit this averaging to substantially reduce query complexity. This suggests that estimating ${\tt Trace}_k$ remains hard.
On the other hand, combining~\cite[Theorem~3.4]{bansal2021k} and~\cite[Theorem~1.5]{girish2026fourier}, we obtain that for $f(x)=\texttt{Trace}_k$, the $\ell_1$-norm of the level-$\ell$ Fourier coefficients satisfies $L_{1,\ell}(f) \le N^{(\ell-2)/2}$
up to lower-order factors. By~\cite[Corollary~3.5]{bansal2021k}, any decision tree of depth $d$ satisfies $L_{1,\ell} \le d^{\ell/2}$. Comparing these bounds suggests that $d^{\ell/2} \gtrsim N^{(\ell-2)/2} \text{ for all } \ell \le k,$
which in turn yields $d \gtrsim N^{1-2/k}$. 
Finally, using a similar technique \cite{bansal2021k}, one can easily prove a lower bound of $\widetilde{\Omega}(N^{1-2/k})$, with the exception that the additive error is of order $1/N$, which is too small.

The above motivates the following conjecture.

\begin{conj}
\label{conj:trace}
Estimating ${\tt Trace}_k$ up to an additive error $\varepsilon_k$ (depending only on $k$) requires $\widetilde{\Omega}(N^{1-2/k})$ classical queries.
\end{conj}  

Regarding Conjecture~\ref{conj:trace}, Shepherd made the following partial progress in~\cite[Proposition~3.3.6]{shepherd2010quantum}.

\begin{prop}
\label{prop: lower bound of trace 4}
Estimating ${\tt Trace}_4$ up to a constant additive error requires $\widetilde{\Omega}(N^{1/4})$ classical queries.
\end{prop}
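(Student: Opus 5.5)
We prove Proposition~\ref{prop: lower bound of trace 4} with Yao's minimax principle. We build two distributions on inputs $(\x_1,\dots,\x_4)\in(\{\pm1\}^N)^4$. Under the ``yes'' distribution ${\tt Trace}_4$ concentrates around a constant $c>0$; under the ``no'' distribution it concentrates around $0$. We then show that any deterministic decision tree making $q=o(N^{1/4}/\mathrm{polylog}N)$ queries has nearly the same output distribution under both.

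\textbf{Step 1: formula.} Expanding the trace in the computational basis gives
\[
{\tt Trace}_4(\x)=\frac{1}{N^{5}}\sum_{i,j,k,l}x_1(i)x_2(j)x_3(k)x_4(l)\,(-1)^{i\cdot j+j\cdot k+k\cdot l+l\cdot i},
\]
a cyclic Forrelation-type quantity.

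\textbf{Step 2: the two distributions.} For the ``no'' case we take all $\x_t$ uniform. Then ${\tt Trace}_4$ has mean $0$ and variance $O(1/N)$, so with high probability it is $o(1)$. For the ``yes'' case we follow Aaronson--Ambainis and Raz--Tal. Sample Gaussians $z_1,z_3$ i.i.d.\ $\mathcal{N}(0,I)$. Set $z_2=H z_1$ and $z_4=H z_3$, with $H$ the normalized Hadamard transform, possibly mixing in correlations so that the cyclic chain closes up. Take $\x_t=\sgn(z_t)$. The bilinear form is then a chain of correlated Gaussian pairs, and ${\tt Trace}_4$ has expectation a positive constant, coming from the $(2/\pi)$-type correlation of sign functions. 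A second-moment computation shows concentration up to a constant. Constant additive error then distinguishes the cases, and a non-adaptive union with a small-probability event handles the tails.

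\textbf{Step 3: indistinguishability.} This is the main obstacle. For the Gaussian version, any set $S$ of queried coordinates sees a joint distribution whose cross-covariances between blocks are entries of $H$, each of magnitude $N^{-1/2}$. Following the Raz--Tal/Bansal--Sinha analysis, we bound the advantage of a depth-$q$ tree by its level-$\ell$ Fourier weights. Using $L_{1,\ell}\le q^{\ell/2}$ for decision trees (\cite[Corollary~3.5]{bansal2021k}), the level-$\ell$ contribution is at most $q^{\ell/2}N^{-\ell/4}$ up to polylog factors; the exponent reflects that a monomial must run around the length-4 cycle, costing $N^{-1/2}$ per pair of links. Summing over $\ell\ge 2$ gives advantage $o(1)$ when $q\ll N^{1/4}$.

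Two points need care. First, we must pass from Gaussians to signs, using truncation and the multilinearity of the tree's polynomial as in \cite{bansal2021k}. Second, the cyclic structure makes the lowest surviving Fourier level $\ell=4$, not $\ell=2$. This is exactly why the bound is $N^{1/4}$ rather than the $N^{1-2/k}=N^{1/2}$ of Conjecture~\ref{conj:trace}: the crude estimate $q^{\ell/2}N^{-\ell/4}$ loses compared with the level-specific bound $N^{(\ell-2)/2}$. We would first try a direct second-moment (hybrid) argument over the at most $q$ queried positions per block before falling back to the Fourier machinery.
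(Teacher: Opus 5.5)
\textbf{Verdict: there is a genuine gap, in Step 2.} The paper gives no proof of this statement; it imports the bound from Shepherd's thesis. Judged on its own, your argument fails because the proposed ``yes'' distribution cannot make ${\tt Trace}_4$ a positive constant. No choice of the unspecified ``mixing'' fixes this, as long as each block stays marginally $\mathcal N(0,I)$, which is what your Step 3 assumes.

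\textbf{Why Step 2 fails.} Write $\hat x(s)=\frac1N\sum_j(-1)^{s\cdot j}x(j)$. Then $(H^{\otimes n}O_{\x}H^{\otimes n})_{ik}=\hat x(i\oplus k)$, and
\[
{\tt Trace}_4(\x)=\sum_{s}\hat x_2(s)\,\hat x_4(s)\,C_{13}(s),\qquad C_{13}(s):=\frac1N\sum_i x_1(i)\,x_3(i\oplus s)=\sum_t \hat x_1(t)\,\hat x_3(t)\,(-1)^{t\cdot s}.
\]
We have $\sum_s\hat x_t(s)^2=1$ and $\sum_s C_{13}(s)^2=N\sum_t\hat x_1(t)^2\hat x_3(t)^2$. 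Cauchy--Schwarz then gives the pointwise bound $|{\tt Trace}_4(\x)|\le\sqrt N\,\|\hat x_3\|_\infty\|\hat x_4\|_\infty$.

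In your construction every $z_t$, including $Hz_1$ and $Hz_3$, is marginally $\mathcal N(0,I)$. So every $\x_t=\sgn(z_t)$ is marginally uniform. Hoeffding plus a union bound give $\|\hat x_t\|_\infty=O(\sqrt{\log N/N})$ with high probability. Hence ${\tt Trace}_4=O(\log N/\sqrt N)$, however the blocks are correlated. Even before rounding, $z_2=Hz_1$ and $z_4=Hz_3$ give $\mathbb{E}\,\frac1N\tr[Z_1HZ_2HZ_3HZ_4H]=N^{-2}$, where $Z_t=\mathrm{diag}(z_t)$.

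The Raz--Tal mechanism works for Forrelation because that quantity is a single matrix entry. A normalized trace needs global structure, and cross-block correlations of size $N^{-1/2}$ cannot supply it. Consequently Step 3 analyses the wrong kind of hidden structure.

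\textbf{The missing idea.} The bound above forces some blocks to carry Fourier coefficients of size $\Omega(N^{-1/4})$. That is within-block structure, and it is what must be hidden. Here is a construction that does this:
\begin{itemize}
\item Pick a uniformly random subspace $S\le\mathbb F_2^n$ of dimension $n/2$.
\item In the ``yes'' case, let $\x_1=\x_3$ be a random function constant on cosets of $S$, and $\x_2=\x_4$ a random function constant on cosets of $S^\perp$. Then $\hat x_2$ is supported on $S$ and $C_{13}\equiv 1$ on $S$, so ${\tt Trace}_4=1$.
\item In the ``no'' case, take $\x_1=\x_3$ and $\x_2=\x_4$ uniform. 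Here ${\tt Trace}_4=O(\sqrt{\log N/N})$. Fully independent blocks would not do, since they are distinguished by checking $x_1(i)=x_3(i)$.
\end{itemize}
Until the algorithm queries two points whose XOR lies in $S$ (for $\x_1$) or in $S^\perp$ (for $\x_2$), its answers are uniform in both worlds. With $q$ queries such a pair appears with probability $O(q^2/\sqrt N)$. This Simon-type birthday bound is the natural source of the exponent $1/4$, rather than a Fourier-level count.

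\textbf{Two smaller slips.}
\begin{itemize}
\item The prefactor in Step 1 should be $N^{-3}$, not $N^{-5}$.
\item Your level estimate $q^{\ell/2}N^{-\ell/4}=(q^2/N)^{\ell/4}$ stays small up to $q\approx N^{1/2}$. So your explanation of why the exponent is $1/4$ rather than $1/2$ is internally inconsistent.
\end{itemize}
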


With the above proposition, we have the following result.

\begin{thm}[Classical lower bound on query complexity]
\label{thm: intro lower bound}
Let $s$ be an integer. Assume that $f(x):[a,b] \subseteq [-1,1]\rightarrow [-1,1]$ is a continuous function.
Suppose there exists a constant $\varepsilon<1$ such that $d:=\widetilde{\deg}_\varepsilon (f)=\Omega(\poly(n))$ and $E_{d}/E_{d-1}<1/2$.
Then there is an $s$-sparse, log-local Hamiltonian $A$ for which
\be
\widetilde{\Omega} \left(
(s/2)^{(\widetilde{\deg}_\varepsilon(f)-4)/16}
\right)
\ee
classical queries are required to estimate the normalized trace of $f(A)$ up to a constant accuracy.
\end{thm}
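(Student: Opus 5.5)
We plan to prove Theorem~\ref{thm: intro lower bound} by a query-preserving reduction from ${\tt Trace}_4$, in the form of Proposition~\ref{prop: lower bound of trace 4}, to normalized trace estimation for $f(A)$. The reduction should reuse the construction behind Theorem~\ref{thm:intro-main}.

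\textbf{Step 1: the circuit and its Hamiltonian.} Take the ${\tt Trace}_4$ circuit $U = O_{\x_1}H^{\otimes n}O_{\x_2}H^{\otimes n}O_{\x_3}H^{\otimes n}O_{\x_4}H^{\otimes n}$. Pad it with identities to a circuit $U_T\cdots U_1$ of length $T$, where $T$ is tied to $d=\widetilde{\deg}_\varepsilon(f)$. Apply the circuit-to-Hamiltonian construction with a cyclic clock, as in the DQC1-hardness proof. This gives a Hamiltonian $A$ whose restriction to each invariant subspace is a periodic Jacobi operator twisted by the eigenphases of $U$. Consequently $\frac{1}{2^n}\tr[f(A)]$ is an explicit affine function of $\frac{1}{2^n}\tr[U^{j}]$, with coefficients determined by $f$ through the spectrum of the periodic Jacobi operator. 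Here the $U^j$ are the winding contributions. The $j=1$ coefficient is nonzero exactly because $f$ cannot be approximated by polynomials of degree below $T$.

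\textbf{Step 2: isolating the ${\tt Trace}_4$ term.} This is the step we expect to be the main obstacle. We would invoke the same approximation-theoretic lemma used for Theorem~\ref{thm:intro-main}: the equioscillation theorem (Theorem~\ref{thm:approxiamtion}) together with $E_d/E_{d-1}<1/2$. The goal is to show that the degree-$T$ component dominates, i.e., its coefficient has magnitude at least a constant times $E_{d-1}-E_d \ge E_{d-1}/2$. The higher windings ($j\ge 2$) are controlled by the tail $E_{2T}\le E_d$ and are therefore smaller. A constant-accuracy estimate of $\frac{1}{2^n}\tr[f(A)]$, after subtracting the classically computable $j=0$ term, then yields a constant-accuracy estimate of ${\tt Trace}_4$. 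A standard amplification or rescaling step may be needed to adjust the constants.

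\textbf{Step 3: sparsity and query accounting.} Each oracle $O_{\x_i}$ and each layer $H^{\otimes n}$ must appear in $A$ through $s$-sparse, log-local terms. The oracle is diagonal, so it adds no sparsity. The Hadamard layer is dense, so we would split $H^{\otimes n}$ into blocks of $\log_2(s/2)$ qubits per clock step. This gives row sparsity at most $s$ (a factor $s/2$ from the block and a factor $2$ from the hopping structure) and uses $n/\log_2(s/2)$ steps per layer. Every entry query to $A$ costs at most one query to some $\x_i$. In the resulting instance the dimension $N$ and the degree $d$ are linked by $N = (s/2)^{\Theta(d)}$, with the constant set by the four oracle layers, the clock overhead, and the additive slack $-4$ for the four oracle steps. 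Pinning down that constant is the remaining bookkeeping. Proposition~\ref{prop: lower bound of trace 4} then gives the lower bound $\widetilde{\Omega}(N^{1/4}) = \widetilde{\Omega}\big((s/2)^{(d-4)/16}\big)$ classical queries to $A$.
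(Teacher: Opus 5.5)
Your skeleton is the paper's: reduce from ${\tt Trace}_4$ via Proposition~\ref{prop: lower bound of trace 4}, use the clock Hamiltonian of Theorem~\ref{thm:Restatement of Theorem1} with Jacobi parameters from Theorem~\ref{thm:approxiamtion}, split each $H^{\otimes n}$ into $\ell$ blocks of $r=\log_2(s/2)$ qubits so that $A$ is $2^{r+1}=s$-sparse, and count $m=4\ell+4$ unitaries. There are two bookkeeping corrections. First, the paper does not pad with identities. It chooses $n=r\ell$ so that $4\ell+4=d$ exactly, i.e.\ $n=\log_2(s/2)(d-4)/4$, and this is what gives $2^{n/4}=(s/2)^{(d-4)/16}$; padding would only lower $n$. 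Second, there is no extra clock overhead: the $-4$ is just the four $1$-sparse oracle steps. Your remark that each entry of $A$ costs at most one query to some $\x_i$ is a point the paper leaves implicit.

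The genuine gap is Step~2, which you flag as the main obstacle and sketch through a winding (Fourier-in-$\theta$) decomposition that does not close as set up.

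\emph{The signal does not come from approximate degree alone.} The $\cos\theta$ coefficient is not nonzero merely because $f$ is hard to approximate below degree $m$. For generic hoppings such as $a_t=0$, $b_t=1/2$ (so $\Delta=T_m$), one gets $\tr[f(A_\theta)]=m\sum_{j\ge 0}\hat f_{jm}\cos(j\theta)$ in Chebyshev coefficients. The coefficient $m\hat f_m$ then vanishes for $f=T_{m+1}$, although $E_{m-1}(T_{m+1})=1$. The signal is created by the Theorem~\ref{thm:approxiamtion} choice that aligns $\Delta$ with $(g-P^*_{d-1})/(E_{d-1}-\eta)$.

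\emph{Your error accounting is misplaced.} With that choice, the $\cos\theta$ coefficient is $m(E_{d-1}-\eta)$ plus the $\cos\theta$ mode of $\theta\mapsto\tr[(f-g)(A_\theta)]$. That mode is only bounded by $\frac{4}{\pi}m\eta$ with $\eta\ge E_d$. This is a winding-one error that no $E_{2T}$ tail controls. Even positivity of the coefficient via this bound needs $E_d/E_{d-1}<\pi/(\pi+4)$, which is more than the hypothesis gives.

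\emph{The missing idea is that no mode isolation is needed.}
Since $\det(xI-A_\theta)=e(\Delta(x)-\cos\theta)$, Cayley--Hamilton gives $\Delta(A_\theta)=\cos\theta\,I_m$, hence $\tr[\Delta(A_\theta)]=m\cos\theta$ exactly. Because $\deg P^*_{d-1}<m$, $\tr[P^*_{d-1}(A_\theta)]=\tr[P^*_{d-1}(A_0)]$ is independent of $\theta$ and exactly computable. Finally, $\|A_\theta\|\le 1$ together with $|f-g|\le\eta$ bounds the remainder by $m\eta$ uniformly in $\theta$. Summing over the eigenphases of $U$ yields \eqref{error2}:
$\frac{\tr[f(A)]}{m2^n}=(E_{d-1}-\eta)\bigl(\frac{\Re\tr U}{2^n}\pm\eta'\bigr)+\frac{\tr[P^*_{d-1}(A_0)]}{m}$, with $\eta'=\eta/(E_{d-1}-\eta)<1$ at $\eta=E_d$. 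This is exactly what the paper's proof quotes.

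Note also that $\eta'$ is a deterministic bias, so repeating the estimator cannot shrink it. The constant accuracy handed to Proposition~\ref{prop: lower bound of trace 4} comes only from the choice of $\eta$ and $\xi$.
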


As discussed in the abstract, the condition $E_{d}/E_{d-1}<1/2$ can be relaxed to Lipschitz condition in Theorems \ref{thm:intro-main} and \ref{thm: intro lower bound}. 
Taken together, Theorems~\ref{thm:intro-main} and \ref{thm: intro lower bound} identify the approximate degree of $f(x)$ as the fundamental parameter governing the computational hardness of normalized trace estimation in the DQC1 model.
More broadly, approximate degree has long played a central role in characterizing computational complexity, particularly for Boolean functions \cite{beals1998quantum, bun2022approximate, montanaro2024quantum},
and our result extends this perspective to the setting of quantum spectral sum estimation.




\subsection{Technical overview}

Our proof departs fundamentally from previous approaches, which are based on refinements of Brand{\~a}o’s framework \cite{brandao2008entanglement} using Taylor expansions together with delicate error control on spectral sums (see, e.g., \cite{edenhofer2025dequantization, shao-wang}). While these works combine circuit-to-Hamiltonian constructions with function-specific and technically involved analyses, our approach provides a unified and conceptually clean framework that applies to a broad class of functions through a single structural mechanism.

The starting point of our construction is still the circuit-to-Hamiltonian paradigm, but with the introduction of tunable parameters. Given an $n$-qubit quantum circuit $U = U_m \cdots U_2 U_1$ such that $m=\mathrm{poly}(n)$ and each $U_t$ acts on $O(\log n)$ qubits, we consider the log-local Hamiltonian
\begin{equation}
A =
\sum_{t=1}^m a_t \ket{t}\bra{t} \otimes I_n +
\sum_{t=1}^m b_t \bigl(
\ket{t}\bra{t-1} \otimes U_t +
\ket{t-1}\bra{t} \otimes U_t^\dagger
\bigr),
\end{equation}
where $a_t\in \mathbb{R}, b_t \in \mathbb{R}^*$ are free parameters and $\ket{m} := \ket{0}$. This Hamiltonian is log-local and consists of $\mathrm{poly}(n)$ terms. The dimension of $A$ is $m2^n$.

A key observation is that for any eigenpair $(e^{i\theta}, \, \ket{\phi_\theta})$ of $U$, the restriction of $A$ to the invariant subspace spanned by
\be
\Bigl\{
\ket{0} \otimes \ket{\phi_\theta}, \,\,\,\,
\ket{1} \otimes U_1 \ket{\phi_\theta}, \,\,\,\,
\ket{2} \otimes U_2U_1 \ket{\phi_\theta}, \,\,\,\,
\ldots, \,\,\,\,
\ket{m-1} \otimes U_{m-1}\cdots U_2U_1 \ket{\phi_\theta}
\Bigr\}
\ee
is a periodic Jacobi matrix of the form
\begin{equation}
\label{intro:periodic Jacobi matrix}
A_\theta :=
\begin{bmatrix}
a_1 & b_1 &        &        & b_m e^{i\theta} \\
b_1 & a_2 & b_2    &        &                \\
    & b_2 & \ddots & \ddots &                \\
    &     & \ddots & \ddots & b_{m-1}         \\
b_m e^{-i\theta} & & & b_{m-1} & a_m
\end{bmatrix}.
\end{equation}
As a consequence, the trace decomposes as
\begin{equation}
\label{intro:trace decomposition}
\tr[f(A)] =
\sum_{\substack{\theta :\, e^{i\theta} \text{ is an eigenvalue of } U}}
\tr[f(A_\theta)].
\end{equation}

Since estimating $\frac{1}{2^n}\Re(\tr[U]) = \frac{1}{2^n}\sum_\theta \cos\theta$ is DQC1-complete, to establish the DQC1 completeness of estimating the normalized trace $\tr[f(A)]$, it suffices to relate $\tr[f(A_\theta)]$ to $\cos\theta$ for any fixed $\theta$. This connection is established through classical results on periodic Jacobi matrices and the Chebyshev equioscillation theorem. 

More precisely, the characteristic polynomial of $A_\theta$ admits the representation
\be
\det(xI - A_\theta) = 2 b_1 \cdots b_m \bigl( \Delta(x) - \cos\theta \bigr),
\ee
where $\Delta(x)$, known as the discriminant, is independent of $\theta$. A direct calculation shows that $\tr[\Delta(A_\theta)]=m\cos\theta$. Moreover, for any $\theta$, the equation $\Delta(x) - \cos\theta = 0$ has $m$ real zeros, which are the eigenvalues of $A_\theta$.  

A natural intuition is that if $\Delta(x)$ closely approximates $f(x)$, then $\tr[f(A_\theta)]$ will correlate with $m \cos\theta$. However, this intuition must be refined: $\Delta(x)$ is an oscillatory polynomial that intersects any horizontal line $\cos\theta \in [-1,1]$ at exactly $m$ real points, a property that a generic function $f(x)$ usually does not possess. To exploit this structure, we invoke the Chebyshev equioscillation theorem \cite{golomb1962lectures}, which characterizes optimal min--max polynomial approximations. Roughly speaking, if $g(x)$ is the best uniform polynomial approximation of $f(x)$ of degree $d$, then $f(x)-g(x)$ oscillates with $d+2$ alternating points, see Theorem \ref{thm:Chebyshev Equioscillation Theorem} below.

We show that, after appropriate normalization, there is a discriminant $\Delta(x)$ that can be identified with an oscillatory approximation of $(f(x)-g(x))/E$, where $E=\|f-g\|_\infty$. Through $\Delta(x)$, we can reconstruct a periodic Jacobi matrix of the form \eqref{intro:periodic Jacobi matrix}. 
Moreover, $\deg(\Delta(x)) > \deg(g(x))$, which implies that $\tr[g(A_\theta)]$ is independent of $\cos\theta$ and can  be computed efficiently in ${\rm poly}(n)$ time.
Together with \eqref{intro:trace decomposition}, we obtain the desired correspondence between  $\frac{1}{m2^n} \tr[f(A)]$ and $\frac{1}{2^n} \Re(\tr[U])$. 

In this process, we can choose the degree of $g(x)$ as the approximate degree $\widetilde{\deg}_\varepsilon(f)$ for a fixed precision $\varepsilon$. In addition, we establish a connection between Chebyshev equioscillation theorem and the approximate degree of $f(x)$, revealing approximate degree as the fundamental parameter governing DQC1-hardness of normalized trace estimation.

Regarding the query complexity lower bound, we extend the $k$-Forrelation problem~\cite{aaronson2015forrelation} to the DQC1 model. Specifically, we consider the problem of estimating the normalized trace of 
$U = O_{\x_1} H^{\otimes n} \cdots O_{\x_k} H^{\otimes n},$
where $H$ is the $2 \times 2$ Hadamard gate and $O_{\x_1}, \ldots, O_{\x_k}$ are oracle operators. Combining the lower bound for estimating $\frac{1}{2^n}\tr[U]$ with the connection established above between $\frac{1}{m2^n} \tr[f(A)]$ and $\frac{1}{2^n} \Re(\tr[U])$, we obtain a lower bound for estimating $\frac{1}{m2^n} \tr[f(A)]$.


\subsection{Comparison with previous work \cite{edenhofer2025dequantization}}

Theorem 4.8 in \cite{edenhofer2025dequantization} establishes DQC1-hardness for estimating 
$\frac{1}{2^n}\tr[f(A)]$ in the block-encoding model. Concretely, it assumes the existence of a unitary $U$ with $O(\log n)$ ancillas such that its top-left corner approximates $A$, and that $f$ and $f^{-1}$ on $[-1/2,1/2]$ admit polynomial approximations of degree $O({\rm poly}(n/\varepsilon))$, with $f$ being Lipschitz continuous with Lipschitz constant ${\rm poly}(n)$. Under these assumptions, estimating $\frac{1}{2^n}\tr[f(A)]$ is DQC1-hard.

While this result provides a valuable characterization, the block-encoding assumption is somewhat restrictive for natural Hamiltonians. 
For instance, when $A = \sum_{j=1}^L \alpha_j P_j$ is a linear combination of Paulis, the LCU technique gives $\alpha = \sum_{j=1}^L |\alpha_j|$, and the block encoding requires $O(\log L)$ ancilla qubits. When $A$ is sparse, $\alpha$ equals the maximal number of non-zero entries in any row or column, and the block encoding uses $O(n)$ ancillas. In both cases, achieving $\alpha=1$ is nontrivial, even if $\|A\| \le 1$ is known, e.g., see \cite[page 23, arXiv version]{montanaro2024quantum}. Moreover, the requirement of $f^{-1}$ appears primarily as a technical condition in the proof, rather than reflecting an intrinsic hardness property of the problem. Finally, in this model, estimating $\frac{1}{2^n}\tr[A]$ is already DQC1-complete (see \cite[Theorem 4.6]{edenhofer2025dequantization}), whereas this is not automatically the case for log-local Hamiltonians.

In contrast, apart from the restriction $E_d/E_{d-1}<1/2$, our theorem \ref{thm:intro-main} establishes DQC1-hardness for log-local Hamiltonians under a natural and simple condition based solely on the approximate degree of $f$, without requiring $f^{-1}$ or additional assumptions. This yields a unified and broadly applicable framework for understanding the hardness of normalized trace estimation.

Additionally, in \cite[Theorem 1.1]{edenhofer2025dequantization}, the authors provide a classical algorithm for estimating the normalized trace of $f(A)$ for any $s$-sparse matrix $A$. Specifically, if $f(x)$ is a degree-$d$ polynomial satisfying $|f(x)| \le 1$ for all $x \in [-1,1]$, then there exists a classical algorithm with query complexity $O(s^d/\varepsilon^2)$ that outputs an $\varepsilon$-approximation of the normalized trace of $f(A)$. This result extends naturally to continuous functions via polynomial approximation. Since the complexity depends on the polynomial degree, one aims to approximate $f(x)$ by a polynomial of minimal degree, i.e., its approximate degree. 
In contrast, Theorem~\ref{thm: intro lower bound} establishes an exponential lower bound in terms of the approximate degree. Although a gap remains between the upper and lower bounds, both results indicate that the approximate degree is a key parameter governing the computational complexity of this problem.

\section{Preliminaries}
\label{intro:Preliminaries}

In this section, we list and also prove some results that are useful throughout this work.

\subsection{One clean qubit model}

The one clean qubit model (DQC1) is a restricted model of quantum computation that captures nontrivial quantum computational power with minimal quantum resources. In this model, the input consists of a single pure qubit (the ``clean" qubit) together with a collection of qubits in the maximally mixed state. A polynomial-size quantum circuit is then applied, and only the clean qubit is measured at the end of the computation, see Figure \ref{circuit for DQC1}.

A simple calculation shows that the probability of measuring zero is $\frac{1}{2} + \frac{1}{2} \frac{{\rm Re}(\tr[U])}{2^n}$. So the real part of the normalized trace can be estimated up to accuracy $\varepsilon$ by repeating the
procedure $O(1/\varepsilon^2)$ times. The imaginary part can be estimated similarly. This is a class of central problems that can be efficiently solved by the DQC1 mode, and is believed to be classically hard. 

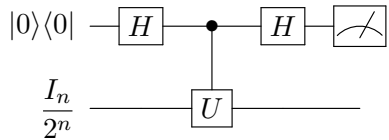
\begin{figure}[h]
\[
\Qcircuit @C=1em @R=1.5em {
\lstick{\ket{0}\bra{0}} &   \gate{H} & \ctrl{1}    & \gate{H}  & \meter \\
\lstick{\displaystyle \frac{I_n}{2^n}}   &  \qw       & \gate{U}   &  \qw  &  \qw  \\
}
\]
\caption{The DQC1 computational model.}
\label{circuit for DQC1}
\end{figure}

The study of DQC1 has led to the definition of the complexity class DQC1, which consists of problems efficiently solvable within this model. Understanding the power and limitations of DQC1 provides insight into the role of quantum coherence and entanglement in quantum computation, and helps clarify which quantum resources are essential for achieving quantum speedups.

\begin{definition}[DQC1 complexity class]
\label{def:DQC1-complete}
    A language $L$ is in DQC1 if and only if, for any $n$-bit string $x$, there exists a polynomial-time generated quantum circuit $U_x$ acting on $q(n)=\poly(n)$ qubits such that for some specified polynomial $p(n)$,
    \bea
    \label{eq:dqc1}
         \mu_{\mathrm{YES}} &\geq&  \frac{1}{2} + \frac{1}{p(n)}, \quad\text{if } x\in L, \\
        \mu_{\mathrm{YES}} &\leq&  \frac{1}{2} - \frac{1}{p(n)}, \quad\text{if } x\notin L.
    \eea
where
\be
\mu_{\mathrm{YES}} := \tr \Bigl[U_x \Bigl(\ketbra{0}{0} \otimes \frac{I_{q(n)-1}}{2^{q(n)-1}} \Bigr) U_x^{\dagger} \bigl(\ketbra{1}{1} \otimes I_{q(n)-1}\bigr) \Bigr]
\ee
is the probability of measuring the clean qubit at 1.
\end{definition}

In DQC1, we are allowed to repeat the computation a polynomial number of times, which is done outside the model, so the above is equivalent to the setting with $\mu_{\mathrm{YES}} \geq 2/3$ when $x\in L$ and  $\mu_{\mathrm{YES}} \leq 1/3$ when $x\not\in L$. 
This is because by repeating the experiment ${\rm poly}(r)$ times and taking the majority vote, one can achieve high probability of correctness \cite{shor2008estimating}.  


\begin{prop}[Theorem 2 of  \cite{knill1998power}]
\label{prop: TrU is DQC1 complete}
Let $\varepsilon\in (1/{\rm poly}(n) ,1)$, which can be a constant, and let $U$ be a unitary acting on $n$ qubits. Then
determining $\frac{1}{2^n} \Re(\tr[U]) > \varepsilon$ or $\frac{1}{2^n} \Re(\tr[U]) < - \varepsilon$ is DQC1-complete. Equivalently,
estimating $\frac{1}{2^n} \Re(\tr[U]) \pm \varepsilon$ is DQC1-complete.
\end{prop}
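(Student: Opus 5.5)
The statement is Proposition~\ref{prop: TrU is DQC1 complete}, quoted from Knill--Laflamme. The proof has two directions: membership in DQC1 via the Hadamard test of Figure~\ref{circuit for DQC1}, and hardness via a generic DQC1 computation whose bias is encoded as a normalized trace.

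\textbf{Membership.} Given a polynomial-size description of $U$, run the circuit in Figure~\ref{circuit for DQC1}. A direct computation gives acceptance probability $\Pr[0]=\frac12+\frac12\cdot\frac{1}{2^n}\Re(\tr[U])$. In the case $\frac{1}{2^n}\Re(\tr[U])>\varepsilon$ we get $\Pr[0]\ge \frac12+\varepsilon/2$, and in the other case $\Pr[0]\le\frac12-\varepsilon/2$. Since $\varepsilon\ge 1/\poly(n)$, this is exactly the gap in Definition~\ref{def:DQC1-complete} with $p(n)=2/\varepsilon$, after swapping the roles of the outcomes $0$ and $1$ (apply an $X$ on the clean qubit). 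The controlled-$U$ is built gate by gate from the circuit for $U$, so it has polynomial size. Repeating $O(1/\varepsilon^2)$ times and using Hoeffding's bound gives the estimation version to accuracy $\pm\varepsilon$.

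\textbf{Hardness.} Take any $L\in$ DQC1 with circuit $U_x$ on $q=q(n)$ qubits. Set $\Pi_1=\ketbra{1}{1}\otimes I_{q-1}$ and define the unitary $V=U_x^\dagger Z_1U_x$, where $Z_1$ is Pauli $Z$ on the first qubit; $V$ has polynomial size. Since $\Pi_1=(I-Z_1)/2$,
\[
\mu_{\mathrm{YES}}=\frac{1}{2^{q-1}}\tr\!\Bigl[(\ketbra00\otimes I)\,U_x^\dagger \tfrac{I-Z_1}{2}U_x\Bigr]=\frac12-\frac{1}{2^{q}}\tr\!\bigl[(\ketbra00\otimes I)V\bigr].
\]
To remove the projector onto $\ket0$, write $\ketbra00=(I+Z_1)/2$ and consider the unitary $W=\ketbra00\otimes V+\ketbra11\otimes I$ on $q+1$ qubits, i.e.\ $V$ controlled on the first qubit being $0$. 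Then
\[
\tr W=\tr\bigl[(\ketbra00\otimes I)V\bigr]+3\cdot 2^{q-1}.
\]
Both the shift $3\cdot2^{q-1}$ and the normalization are known, and $\frac{1}{2^{q+1}}\tr W$ is real because $V$ is Hermitian. So $\mu_{\mathrm{YES}}\ge\frac12+\frac1p$ or $\le\frac12-\frac1p$ becomes a threshold question on $\frac{1}{2^{q+1}}\Re\tr W$ with gap $\Theta(1/p)$ around a known constant $c$.

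\textbf{Re-centering and amplification.} It remains to move the threshold $c$ to $0$ and to raise the gap to an arbitrary $\varepsilon$, possibly constant.
\begin{itemize}
\item Re-centering: take a direct sum (controlled choice on one extra qubit) of $W$ with a fixed product unitary whose normalized trace is an explicit dyadic constant, for example tensor powers of the phase gate $\mathrm{diag}(1,e^{i\phi})$. This shifts the threshold to $0$ while shrinking the gap only by a constant factor.
\item Amplification: I expect this to be the main obstacle, since traces are not obviously amplifiable. The natural route uses $\tr[W^{\otimes k}]=(\tr W)^k$ on disjoint registers. Instead of amplifying the trace, I would rely on the remark after Definition~\ref{def:DQC1-complete}: DQC1 with $1/\poly$ gap equals DQC1 with a $2/3$ versus $1/3$ gap, via majority voting carried out inside a single DQC1 circuit using $O(\log n)$ clean qubits, which is equivalent to one clean qubit~\cite{shor2008estimating}. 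Applying the construction above to the amplified circuit then yields a constant gap, hence any $\varepsilon$ in the stated range.
\end{itemize}
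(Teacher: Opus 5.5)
Your membership argument is fine, but the hardness reduction breaks at $W$. Because $W=\ketbra{0}{0}\otimes V+\ketbra{1}{1}\otimes I$ is controlled by a \emph{fresh} qubit, $\tr W=\tr V+2^{q}$. Moreover $\tr V=\tr[U_x^\dagger Z_1U_x]=\tr Z_1=0$, so $\tr W=2^{q}$ for every $x$, and the instance you output carries no information about $x$. Your identity $\tr W=\tr[(\ketbra{0}{0}\otimes I)V]+3\cdot 2^{q-1}$ would need the \emph{original} clean qubit projected onto $\ket{0}$. No controlled-$V$ can do that, because $V$ itself acts on that qubit.

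The fix is to realize the projector with $Z_1$. From $\ketbra{0}{0}\otimes I=(I+Z_1)/2$ and $\tr V=0$ you get $\tr[(\ketbra{0}{0}\otimes I)V]=\frac12\tr[Z_1V]$, hence
\[
\mu_{\mathrm{YES}}=\frac12-\frac{1}{2^{q+1}}\tr\bigl[Z_1U_x^\dagger Z_1U_x\bigr].
\]
So $W':=Z_1U_x^\dagger Z_1U_x$ has real normalized trace exactly $1-2\mu_{\mathrm{YES}}$, already centered at $0$ with gap $2/p$, and your re-centering step is unnecessary. This settles hardness for inverse-polynomial $\varepsilon$. Tensoring $W'$ with idle qubits leaves the normalized trace unchanged while increasing $n$, which lets you meet any fixed $\varepsilon=1/r(n)$.

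The constant-$\varepsilon$ case is where your argument genuinely fails. Majority voting over $k=O(\log n)$ parallel copies raises a bias $\delta$ only to $\Theta(\sqrt{k}\,\delta)$; constant bias needs $\Theta(1/\delta^2)$ copies, i.e.\ polynomially many clean qubits. On top of that, the DQC$k$-to-DQC1 simulation of~\cite{shor2008estimating} loses a further factor of order $2^{-k}$ in the bias, so you still end with an inverse-polynomial gap. The trace side does not help either: tensor powers and direct sums only produce normalized traces $P(\tau)=\sum_k c_k\tau^k$ with $\sum_k|c_k|\le 1$, and such polynomials satisfy $|P(\tau)-P(-\tau)|\le 2|\tau|$.

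The remark after Definition~\ref{def:DQC1-complete} concerns repetitions \emph{outside} the model. That yields a classical procedure calling many circuits, not a single unitary whose normalized trace has constant bias. The paper gives no proof of its own here---it cites Knill--Laflamme and leans on that same remark. So for constant $\varepsilon$ you have two options. You can take DQC1 to be defined with constant bias per run, in which case $W'$ immediately gives $|\tau|\ge 1/3$ with no amplification. Otherwise you must import a genuine single-run error-reduction theorem for one-clean-qubit computations. For $\varepsilon$ arbitrarily close to $1$, even the first option is insufficient: you need near-zero error in a single run, which your argument does not supply.
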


\begin{prop}[Corollary 6 of \cite{cade}]
\label{prop: in DQC1}
Assume that $A$ is a log-local Hamiltonian acting on $n$ qubits and $f(x):[-1,1]\rightarrow [-1,1]$ is Lipschitz continuous with Lipschitz constant $K$, then there is a DQC1 algorithm to estimate $\frac{1}{2^n} \tr[f(A)] \pm \varepsilon (K+1)$, where $\varepsilon=1/O({\rm poly}(n))$.
\end{prop}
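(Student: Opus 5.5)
The plan is to reduce to estimating the normalized trace of a block of a polynomial-size unitary. The overall structure has three steps: approximate $f$ by a bounded polynomial of polynomial degree, implement that polynomial of $A$ by quantum singular value transformation (QSVT) on a block encoding that uses only $O(\log n)$ ancillas, and then read off the normalized trace with a Hadamard test in DQC$k$ for $k=O(\log n)$, which equals DQC1~\cite{shor2008estimating}.

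\textbf{Step 1 (block encoding).} Write $A=\sum_{j=1}^L\alpha_j P_j$ as a linear combination of Pauli strings. Each log-local term acts on $O(\log n)$ qubits, so it expands into $\poly(n)$ Paulis; hence $L=\poly(n)$ and $\alpha:=\sum_j|\alpha_j|=\poly(n)$. The LCU construction (PREPARE/SELECT) then gives an $(\alpha, O(\log L),0)$ block encoding $W$ of $A$. It uses $O(\log n)$ ancilla qubits that must start in $\ket{0}$, and it has $\poly(n)$ gates.

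\textbf{Step 2 (polynomial approximation).} Set $h(y):=f(\alpha y)$ on $[-1/\alpha,1/\alpha]$, extended to $[-1,1]$ by constants. Then $h$ is $K\alpha$-Lipschitz and bounded by $1$. Jackson's theorem gives a polynomial $p$ of degree $d=O(K\alpha/\varepsilon)=\poly(n)$ with $\|h-p\|_{\infty,[-1,1]}\le \varepsilon K$. Halving, $p/2$ is bounded by $1$ on $[-1,1]$ up to $O(\varepsilon K)$ slack. Since $h$ is bounded by $1$, the approximation makes $\sup|p|\le 1+\varepsilon K$, so $q:=p/(2(1+\varepsilon K))$ satisfies $|q|\le 1/2$ and is usable in QSVT. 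I would split $q$ into even and odd parts and use QSVT, with one extra ancilla for the linear combination of the two parts. This yields a unitary $V$ with $\poly(n)$ gates and $O(\log n)$ ancillas satisfying
\[
(\bra{0}\otimes I)\,V\,(\ket{0}\otimes I)=c\,q(A/\alpha)\approx c\,f(A)/(2(1+\varepsilon K)),
\]
where $c$ is a constant such as $1/2$ coming from the LCU of the parity parts.

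\textbf{Step 3 (trace estimation).} Run the Hadamard test of Figure~\ref{circuit for DQC1}, controlling $V$, with the ancillas of $V$ clean and the $n$ system qubits maximally mixed. The acceptance bias is
\[
\frac{1}{2^n}\Re\tr\bigl[(\bra{0}\otimes I)V(\ket{0}\otimes I)\bigr]=\frac{c}{2^n}\tr[q(A/\alpha)],
\]
so it can be estimated to accuracy $\varepsilon$ with $O(1/\varepsilon^2)=\poly(n)$ repetitions. Rescaling by the constant $2c^{-1}(1+\varepsilon K)$ gives $\frac{1}{2^n}\tr[f(A)]$ with total error $O(\varepsilon)+O(\varepsilon K)=O(\varepsilon(K+1))$. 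Finally, the $O(\log n)$ clean ancillas are absorbed using DQC$k$ $=$ DQC1.

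The main obstacle is the normalization in Step 2. The LCU subnormalization $\alpha=\poly(n)$ inflates the Lipschitz constant to $K\alpha$, and this factor must be checked to cost only a polynomial factor in degree (hence gates) rather than in error. A second point is keeping the QSVT polynomial bounded by $1$ on all of $[-1,1]$, not just on the spectrum; this is why I extend $h$ by constants outside $[-1/\alpha,1/\alpha]$ before applying Jackson's theorem.
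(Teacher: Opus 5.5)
Your argument is correct, but it is not the route behind the cited result. The paper takes this proposition from Cade--Crichigno and notes that their algorithm rests on quantum phase estimation. There, one runs phase estimation of (a simulation of) $e^{iAt}$ on the maximally mixed register. This samples a uniformly random eigenvalue $\lambda$ of $A$ and returns $\tilde\lambda$ with $|\tilde\lambda-\lambda|\le\varepsilon$, except with small probability. Then $f(\tilde\lambda)$ is written onto the clean qubit. The Lipschitz condition turns the eigenvalue error directly into the $K\varepsilon$ term, and the failure probability plus sampling give the remaining $O(\varepsilon)$. The $O(\log(1/\varepsilon))$ phase qubits are absorbed by DQC$k$ $=$ DQC1.

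You instead use an LCU block encoding, a Jackson approximation of $y\mapsto f(\alpha y)$, QSVT, and a Hadamard test on a block. So in your proof the Lipschitz constant enters through Jackson's theorem rather than through pointwise perturbation of eigenvalues.

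The trade-offs are as follows. The phase-estimation route needs only pointwise (reversible) evaluation of $f$, with no polynomial construction and no QSVT phase-factor computation. Your route needs that classical preprocessing; it is efficient, but should be justified by citation. Your route also pays the subnormalization $\alpha$ in the degree, which is comparable to the roughly $\alpha/\varepsilon$ cost of simulating $e^{iAt}$ for $t\sim 1/\varepsilon$. In exchange, your cost is governed by the degree of a polynomial approximation of $f$. For smooth $f$ this is far cheaper than $O(\alpha/\varepsilon)$, and it fits the paper's theme that the approximate degree controls the complexity (cf.\ the $\widetilde{O}(s\cdot\widetilde{\deg}_\varepsilon(f))$ quantum algorithm mentioned after Theorem~\ref{thm:lower bound}).

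One small correction. Since $h$ is $K\alpha$-Lipschitz, Jackson's theorem already gives error $\varepsilon K$ at degree $O(\alpha/\varepsilon)$, independent of $K$. Your stated degree $O(K\alpha/\varepsilon)$ is polynomial only when $K$ is. This is harmless, though, because the claimed bound is trivially achievable by outputting $0$ once $K\ge 1/\varepsilon$.
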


The quantum algorithm in the above theorem is mainly based on quantum phase estimation, so the cost is linear in $1/\varepsilon$. When $K=O({\rm poly}(n))$, we can choose a smaller $\varepsilon$ such that $\varepsilon K$ is small. The key point here is that $1/\varepsilon, K$ have order $O({\rm poly}(n))$. For bounded Lipschitz continuous functions, we have $K=O(\widetilde{\deg}_\varepsilon(f)^2)$ via Markov brothers' inequality \cite{Markov}, where $\widetilde{\deg}_\varepsilon(f)$ is the approximate degree, see Definition \ref{defn:approximate degree}.


\subsection{Periodic Jacobi matrices}

We consider the following special type of Jacobi matrices, which is known as {\em periodic Jacobi matrices}, e.g., see \cite{boley1978matrix, van1976spectrum, boley1984modified, ferguson1980construction}:
\be \label{eq:periodic Jacobi matrix}
A_\theta :=\begin{bmatrix}
a_1    & b_1 &      & & b_m e^{i\theta} \\
b_1  & a_2   & b_2  & \\
     & b_2 &  \ddots   & \ddots \\
     &     & \ddots        & \ddots & b_{m-1}\\
b_m e^{-i\theta}        &&  & b_{m-1} & a_m \\
\end{bmatrix},
\ee
where $a_1,\ldots,a_m \in \mathbb{R}$, 
$b_1,\ldots,b_m \in \mathbb{R}^+$ and $\theta\in[0,2\pi)$. Obviously, $A_\theta$ is Hermitian, and so all the eigenvalues are real.

Let $B$ be the tridiagonal matrix of dimension $m$ with subdiagonal entries $b_1,\ldots,b_{m-1}$ (i.e., setting $b_m=0$ in $A_\theta$) and let $C$ be the tridiagonal matrix of dimension $m-2$ with subdiagonal entries $b_2,\ldots,b_{m-2}$  (i.e., removing the first and last row/column in $A_\theta$), then it is easy to compute that the characteristic polynomial of $A_\theta$ is (e.g., see \cite{boley1978matrix})
\be \label{char poly of A}
\det(xI-A_\theta)= h(x) - e \cos\theta ,
\ee
where
\bea
h(x) &=&  \det(xI-B) - b_m^2 \det(xI-C) , \label{1020} \\
e    &=& 2b_1b_2\cdots b_{m}.
\eea
The polynomial $\Delta(x) := h(x)/e$ 
is known as the {\em discriminant} of $A_\theta$ and plays an important role in the study of $A_\theta$. 
For example, assume $a_i=0, b_i=1/2$ for all $i$, then a simple calculation shows that $\Delta(x) = T_m(x)$, where $T_m(x)$ is the $m$-th Chebyshev polynomial of the first kind.

From \eqref{char poly of A} and the property that $A_\theta$ is Hermitian, we know that $\Delta(x) - \cos(\theta) = 0$ always has $m$ real zeros for any $\theta$. Based on this, it is not hard to see that $\Delta(x)$ is an oscillating function and it intersects any horizontal line $y=\cos(\theta)$ for $m$ times exactly. This is a property similar to the Chebyshev Equioscillation Theorem described below.

\subsection{ Chebyshev Equioscillation Theorem and approximate degree }

The Chebyshev Equioscillation Theorem is a central result in minimax approximation theory. It characterizes exactly when a polynomial is the unique minimax approximation to a continuous function.

\begin{thm}[Chebyshev Equioscillation Theorem \cite{golomb1962lectures}]
\label{thm:Chebyshev Equioscillation Theorem}
Let $f:[a,b] \rightarrow \mathbb{R}$ be a continuous function. Among all the polynomials of degree $\le d$, the polynomial $g$ minimizes the uniform norm of the difference $E:=\| f - g \| _\infty:=\max_{a\leq x\leq b}|f(x)-g(x)|$ if and only if there are $d+2$ points $a \le x_1 < x_2 < \cdots < x_{d+2} \le b$ such that 
\be
f(x_i) - g(x_i) = \sigma (-1)^i E
\ee
where $\sigma$ is either $-1$ or $+1$.
That is, the polynomial $g$ oscillates above and below $f$ at the interpolation points, and does so to the same degree.
\end{thm}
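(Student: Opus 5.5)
The plan is to prove the two directions separately. Sufficiency is a zero-counting argument. Necessity is a perturbation argument: if there are too few alternations, we construct a low-degree polynomial that pushes $g$ uniformly closer to $f$. Throughout, write $e(x):=f(x)-g(x)$, which is continuous on the compact interval $[a,b]$. We may assume $E>0$. If $E=0$, then $f=g$ is itself a polynomial of degree $\le d$, so $g$ is trivially a minimizer, and any $d+2$ points satisfy the displayed identity with both sides equal to $0$.

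\emph{Sufficiency.} Suppose $a\le x_1<\cdots<x_{d+2}\le b$ satisfy $e(x_i)=\sigma(-1)^iE$, and suppose some $p$ with $\deg p\le d$ had $\|f-p\|_\infty<E$. Write $g-p=(f-p)-(f-g)$. At each $x_i$ we have $|(f-p)(x_i)|<E=|e(x_i)|$, so $(g-p)(x_i)$ is nonzero and has the sign of $-e(x_i)$. These signs strictly alternate over $d+2$ points. By the intermediate value theorem, $g-p$ then has at least $d+1$ distinct zeros. Since $\deg(g-p)\le d$, this forces $g-p\equiv 0$, which contradicts $\|f-p\|_\infty<E=\|f-g\|_\infty$. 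Hence $g$ is a minimizer.

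\emph{Necessity.} Assume $g$ is a minimizer, and let $k$ be the maximal length of an alternating sequence of extremal points, i.e.\ points where $e=\pm E$ with alternating signs. Suppose, for contradiction, that $k\le d+1$.
\begin{itemize}
\item By uniform continuity, choose $\delta>0$ such that $|x-y|<\delta$ implies $|e(x)-e(y)|<E$. Cut $[a,b]$ into consecutive subintervals of length $<\delta$. On each subinterval that meets the extremal set $\{|e|=E\}$, the error $e$ cannot reach both $+E$ and $-E$, so each such subinterval has a well-defined sign.
\item Merge consecutive marked subintervals of equal sign into blocks. Maximality of $k$ shows that there are exactly $k$ blocks $I_1,\dots,I_k$ with alternating signs.
\item Between blocks $I_j$ and $I_{j+1}$ there is an unmarked region, since adjacent marked cells of opposite sign would give $|e(x)-e(y)|=2E$ at nearby points, which the choice of $\delta$ forbids. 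In that region pick a separation point $z_j$ with $|e(z_j)|<E$.
\end{itemize}
Define
\begin{equation*}
q(x)=\pm\prod_{j=1}^{k-1}(x-z_j),
\end{equation*}
with the overall sign chosen so that $q$ has the sign of $e$ on every block. Since $\deg q=k-1\le d$, the polynomial $g+\lambda q$ is an admissible competitor.

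\emph{The key estimate.} This is the step I expect to be the main technical obstacle. We must show that $\|f-g-\lambda q\|_\infty<E$ for small $\lambda>0$, and the difficulty is to make the bound uniform via a compactness argument.
\begin{itemize}
\item Let $K=\bigcup_j \overline{I_j}$, the union of the closed marked blocks. On $K$ the only zeros of $q$ are the $z_j$, but these lie outside $K$. Since the $z_j$ sit in unmarked regions between the blocks, $q$ has constant sign on each block, so $|q|\ge c>0$ on $K$.
\item On $K$ we have $eq>0$ wherever $|e|$ is near $E$, while $|e|\le E$ everywhere. Hence for $0<\lambda<E/\|q\|_\infty$ we get $|e-\lambda q|\le \max\big(E-\lambda c',\,\ldots\big)$, which should yield $|e-\lambda q|<E$ on $K$. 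To make this precise, I would first shrink to the closed set where $|e|\ge E/2$ and $eq>0$.
\item On the complement $[a,b]\setminus K$, the set is compact after taking closure and misses the extremal set, so $|e|\le E-\eta$ for some $\eta>0$. Taking $\lambda<\eta/\|q\|_\infty$ keeps $|e-\lambda q|<E$ there.
\end{itemize}
Combining the two regions gives $\|f-(g+\lambda q)\|_\infty<E$, which contradicts the minimality of $g$. Therefore $k\ge d+2$, and the first $d+2$ alternating extremal points of such a sequence give the required $x_1<\cdots<x_{d+2}$ together with a sign $\sigma$.
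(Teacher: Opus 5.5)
The paper does not prove this theorem. It is quoted from \cite{golomb1962lectures}, and the closest argument in the paper, the LP-duality discussion in Appendix~A, gets alternation at a $(d+2)$-point discretization by complementary slackness while importing the existence of that discretization from LSIP theory. Your proposal is the standard self-contained textbook proof instead: zero counting for sufficiency, and sign partition plus perturbation for necessity.

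Sufficiency is complete as written. Necessity has the right structure, but it has two holes. First, the step you flag as the obstacle is left unfinished; the bound $\max(E-\lambda c',\ldots)$ is never completed. Second, your reason why consecutive blocks are separated by an unmarked cell is not valid. Extremal points of opposite sign in adjacent cells can be almost $2\delta$ apart, so the choice of $\delta$ does not exclude them directly.

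Both holes close with one fact that your choice of $\delta$ already gives. Suppose a cell $C$ is marked with sign $s\in\{\pm1\}$ and $y\in C$ has $e(y)=sE$. Every $x\in C$ satisfies $|x-y|<\delta$, hence $|e(x)-sE|<E$, i.e.\ $s\,e(x)>0$ on all of $C$. This has two consequences:
\begin{itemize}
\item adjacent cells cannot carry opposite signs, since their common endpoint would satisfy both $e>0$ and $e<0$;
\item $eq>0$ on all of $K$, not just where $|e|$ is near $E$.
\end{itemize}
With $c=\min_K|q|>0$ and $0<\lambda<E/\|q\|_\infty$, this gives on $K$
\[
|e-\lambda q|=\bigl|\,|e|-\lambda|q|\,\bigr|\le\max\bigl(E-\lambda c,\ \lambda\|q\|_\infty\bigr)<E.
\]
So no shrinking to $\{|e|\ge E/2\}$ is needed. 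Without this fact the shrinking would not help anyway: the only dangerous points are points of $K$ where $|e|$ is near $E$ with the wrong sign, and this fact is exactly what excludes them.

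Three smaller points also need fixing.
\begin{itemize}
\item Take $z_j$ in the open interior of an unmarked cell between $I_j$ and $I_{j+1}$. The condition $|e(z_j)|<E$ does not by itself keep $z_j$ out of $K$.
\item Define blocks as maximal same-sign runs in the ordered list of marked cells. Then one extremal point per block forms an alternating sequence, giving at most $k\le d+1$ blocks, which is all that $\deg q\le d$ requires.
\item $k$ is finite because consecutive points of an alternating sequence are at least $\delta$ apart.
\end{itemize}
With these repairs and $\lambda<\min(E,\eta)/\|q\|_\infty$, the argument is complete.
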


\begin{figure}[ht]
    \centering
    
    \scalebox{0.8}{
    \begin{minipage}[b]{0.45\textwidth}
        \centering
 \begin{tikzpicture}[>=stealth, thick]

\draw[->, line width=1pt] (0.3,-0.6) -- (7.5,-0.6);
\draw[->, line width=1pt] (0.7,-0.9) -- (0.7,5.6);

\draw[black, very thick, domain=1.3:6.4, samples=400, smooth]
  plot (\x, {0.6*\x + 1.6*sin((360*\x*\x)/12)});

\draw[black, very thick, domain=1.3:7, samples=400, smooth]
  plot (\x, {0.1 + ln(\x) + 0.27*\x});

\filldraw[black] (1.83703396902498,2.67152686357443736) circle (3pt);
\filldraw[black] (3.89561413282896,3.93059418951833095) circle (3pt);
\filldraw[black] (5.20878153406637,4.72148329977922110) circle (3pt);
\filldraw[black] (6.25375261213554,5.34962631925823562) circle (3pt);

\node (a1) at (1.83703396902498,2.67152686357443736+0.4) {$x_1$};
\node (a2) at (3.89561413282896,3.93059418951833095+0.4) {$x_3$};
\node (a3) at (5.20878153406637,4.72148329977922110+0.4) {$x_5$};
\node (a4) at (6.25375261213554,5.34962631925823562+0.4) {$x_7$};

\node (a4) at (6.25375261213554+0.9,5.34962631925823562-0.2) {{\color{black}$g(x)$}};
\node (a4) at (6.25375261213554+1.4,5.34962631925823562-1.3) {{\color{black}$f(x)$}};
\node (a4) at (1.83703396902498-0.3,2.67152686357443736-0.85) {{\color{black}$E$}};

\draw[black, thin] (1.83703396902498,2.67152686357443736) -- (1.83703396902498,1.204151469);
\draw[black, thin] (3.89561413282896,3.93059418951833095) -- (3.89561413282896,2.511667155);
\draw[black, thin] (5.20878153406637,4.72148329977922110) -- (5.20878153406637,3.156716972);
\draw[black, thin] (6.25375261213554,5.34962631925823562) -- (6.25375261213554,3.621694906);

\filldraw[black] (2.96116035016156,0.188438912811551340) circle (3pt);
\filldraw[black] (4.56618742632752,1.14464030196086264) circle (3pt);
\filldraw[black] (5.73416534303549,1.84362225581235317) circle (3pt);

\node (a2) at (2.96116035016156+0.1,0.188438912811551340-0.4) {$x_2$};
\node (a3) at (4.56618742632752+0.1,1.14464030196086264-0.4) {$x_3$};
\node (a4) at (5.73416534303549+0.1,1.84362225581235317-0.4) {$x_6$};

\draw[black, thin] (2.96116035016156,0.188438912811551340) -- (2.96116035016156,1.985094496);
\draw[black, thin] (4.56618742632752,1.14464030196086264) -- (4.56618742632752,2.851549200);
\draw[black, thin] (5.73416534303549,1.84362225581235317) -- (5.73416534303549,3.394666846);

\end{tikzpicture}
    \end{minipage}
    \quad
    \begin{minipage}[b]{0.45\textwidth}
        \centering
        \begin{tikzpicture}[>=stealth, thick]

\def\A{1.6}   
\def\T{1.8}   

\draw[->, line width=1pt] (0.3,0) -- (7.1,0);
\draw[->, line width=1pt] (0.7,-2.4) -- (0.7,2.6);

\draw[very thick, draw=black, samples=400, smooth, domain=1.1:6.38]
  plot (\x, {1.6*sin(180*\x*\x/6 + 2*\x)});

\draw[dashed, black] (0.7,\A) -- (6.5,\A);
\draw[dashed, black] (0.7,-\A) -- (6.5,-\A);

\filldraw[black] (\T/4+1.24,\A) circle (3pt);
\node (x0) at (\T/4+1.24,2) {$x_1$};

\filldraw[black] (\T/4+3.38,\A) circle (3pt);
\node (x0) at (\T/4+3.38,2) {$x_3$};

\filldraw[black] (\T/4+4.7,\A) circle (3pt);
\node (x0) at (\T/4+4.7,2) {$x_5$};

\filldraw[black] (\T/4+5.76,\A) circle (3pt);
\node (x0) at (\T/4+5.76,2) {$x_7$};

\filldraw[black] (\T/4+2.53,-\A) circle (3pt);
\node (x0) at (\T/4+2.6,-2) {$x_2$};

\filldraw[black] (\T/4+4.11,-\A) circle (3pt);
\node (x0) at (\T/4+4.16,-2) {$x_4$};

\filldraw[black] (\T/4+5.26,-\A) circle (3pt);
\node (x0) at (\T/4+5.31,-2) {$x_6$};

\node (a1) at (0.4,1.6) {$1$};
\node (a2) at (0.25,-1.6) {$-1$};

\node (b) at (2*\T,3.3) {{\color{black}The plot of~$\displaystyle \frac{g(x)-f(x)}{E}$}};

\draw[black] (2*\T-2.3, 3.3-.6) rectangle (2*\T+2.3, 3.3+.6);

\end{tikzpicture}

    \end{minipage}
    }
    \caption{Illustration of the Chebyshev Equioscillation Theorem.}
    \label{fig:The Equioscillation Theorem}
\end{figure}
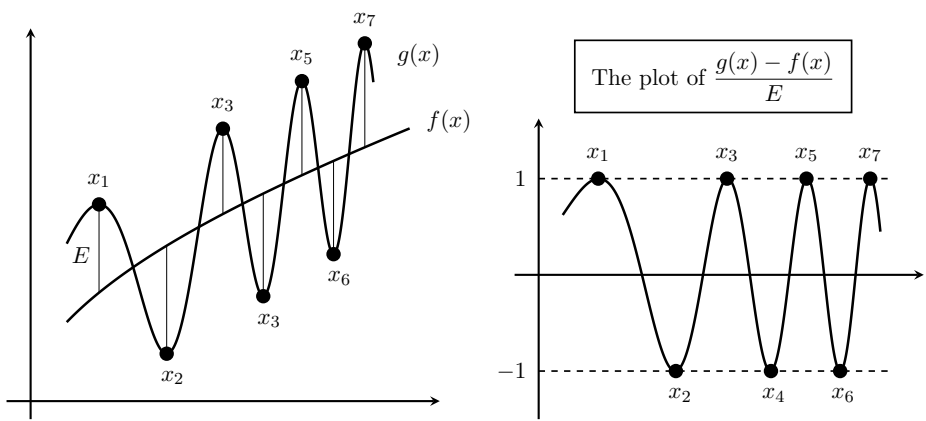

\begin{rmk}
Figure \ref{fig:The Equioscillation Theorem} is a simple illustration of the Chebyshev Equioscillation Theorem. The minimax polynomial $g$ is unique. The alternation property also determines $g$ and $E$ uniquely. But the points $x_1,x_2,\ldots,x_{d+2}$ need not be unique and depend on $f$ and $d$. For certain classical systems (e.g., Chebyshev polynomials), they are known explicitly; in general, they can be computed numerically (e.g., by the {\em Remez algorithm} \cite{fraser1965survey}, which costs at most $O(d^3)$).
\end{rmk}

Polynomial approximation provides a fundamental approach to representing functions. For a given function $f:[-1,1]\to\mathbb{R}$, one may study the best degree-$d$ polynomial approximation in the uniform norm, a classical problem in approximation theory. From the perspective of computational complexity, a more relevant quantity is the approximate degree: the minimal degree of a polynomial that approximates $f$ within error $\varepsilon$ in the uniform norm (see the definition below). This quantity captures the intrinsic complexity of approximating $f$ and plays a central role in various computational models. 

Computing the approximate degree is generally challenging, as it involves optimization over an integer-valued parameter rather than a continuous one. A powerful technique for analyzing approximate degree is the dual polynomial method~\cite{bun2022approximate}, which provides certificates for lower bounds and has been widely used in complexity theory.

Since quantum algorithms typically involve unitary or bounded operators whose spectra lie in $[-1,1]$, we restrict our attention to functions defined on subintervals of $[-1,1]$.

\begin{definition}[Approximate degree]
\label{defn:approximate degree}
Let $f(x): [a,b] \subseteq  [-1,1] \rightarrow [-1,1]$ be a function and let $\varepsilon\in(0,1]$. We define the approximate degree of $f$ as
\be \label{app deg}
\widetilde{\deg}_{\varepsilon}(f) = \min\{d \in \mathbb{N}: |f(x) -g(x)| \leq \varepsilon 
\text{ for all } x\in [a,b], \deg(g)=d \}.
\ee
\end{definition}

We establish below a connection between the dual polynomial method and the Chebyshev Equioscillation Theorem, which might be of independent interest.
For any fixed $d$, the dual polynomial method considers the following linear semi-infinite program (LSIP):
\bea
\label{original LP}
\min_{g,\delta} && \delta  \\
\text{s.t.} && |f(x)-g(x)| \leq \delta, \quad \forall x\in[a,b], \\
&& \deg(g) \leq d .
\eea
By known results from LSIP \cite{lai1992linear,shapiro2009semi}, there exist $\{x_1,\ldots,x_{d+2}\}$  such that (\ref{original LP}) is equivalent to the following linear program (here $d+2$ refers to the number of free parameters in the LSIP \eqref{original LP})
\bea
\label{original LP:discretisation}
\min_{g,\delta} && \delta  \\
\text{s.t.} && |f(x_i)-g(x_i)| \leq \delta, \quad \forall i=1,\ldots,{d+2}, 
\label{original LP:discretisatione-q1} \\
&& \deg(g) \leq d.
\eea

\begin{prop}
\label{prop for LSIP}
The points $\{x_1,\ldots,x_{d+2}\}$ are exactly the equioscillation points in the Chebyshev Equioscillation Theorem.
\end{prop}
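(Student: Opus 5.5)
The strategy is to compare the optimal value of the discretized linear program \eqref{original LP:discretisation} with the minimax error $E_d$. At the discrete optimum, complementary slackness then forces all $d+2$ constraints to be active with alternating signs, which is exactly the equioscillation property of Theorem~\ref{thm:Chebyshev Equioscillation Theorem}.

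\textbf{Step 1.} Let $\delta^*$ denote the common optimal value of \eqref{original LP} and \eqref{original LP:discretisation}. For the LSIP, $\delta^*=E_d$, and the optimizer is the unique minimax polynomial $g^*$. Since the two programs are equivalent, $g^*$ also solves the discrete problem with value $E_d$.

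\textbf{Step 2.} Write the discrete problem as a linear program in the $d+2$ variables $(c_0,\dots,c_d,\delta)$, with constraints $\pm(f(x_i)-g(x_i))\le\delta$. Its dual assigns weights $\lambda_i\in\mathbb{R}$ to the points. The dual asks for
\[
\max \sum_i \lambda_i f(x_i) \quad \text{subject to} \quad \sum_i|\lambda_i|\le 1 \ \text{and}\ \sum_i\lambda_i x_i^k=0 \ \text{for } k=0,\dots,d.
\]
The moment conditions say that $\lambda$ annihilates all polynomials of degree at most $d$. Because the nodes are distinct, this space of weights on $d+2$ points is one-dimensional. It is spanned by the divided-difference weights
\[
\lambda_i\propto \frac{1}{\prod_{j\ne i}(x_i-x_j)},
\]
which are all nonzero and strictly alternate in sign.

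\textbf{Step 3.} Strong duality gives $\sum_i\lambda_i f(x_i)=E_d$ with $\sum_i|\lambda_i|=1$. Complementary slackness then says $|f(x_i)-g^*(x_i)|=E_d$ at every node with $\lambda_i\neq0$, and by Step 2 that is every node. Moreover, the sign of $f(x_i)-g^*(x_i)$ equals the sign of $\lambda_i$, so these signs alternate. Hence $x_1<\dots<x_{d+2}$ is an alternation set for $f-g^*$ of level $E_d=\|f-g^*\|_\infty$, which is precisely an equioscillation set in Theorem~\ref{thm:Chebyshev Equioscillation Theorem}.

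\textbf{Step 4 (converse).} Any equioscillation set $\{x_i\}$ is a valid discretization, as follows. On those points, the weights $\lambda_i=\sigma(-1)^i|\lambda_i|$ from Step 2 are dual feasible after normalizing $\sum_i|\lambda_i|=1$. They give
\[
\sum_i\lambda_i f(x_i)=\sum_i\lambda_i\bigl(f(x_i)-g^*(x_i)\bigr)=E_d,
\]
using that $\lambda$ annihilates $g^*$. By weak duality, the discrete optimum is at least $E_d$. It is also at most $E_d$, since $g^*$ is feasible. So the discrete program on these points reproduces the LSIP.

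\textbf{Main obstacle.} I expect the delicate step to be the degenerate case where the LSIP-selected points are not all distinct, or where some dual weight vanishes. I would handle it as follows. If fewer than $d+2$ distinct points were active, the discrete value could be made $0$ by interpolating $f$ with a degree-$d$ polynomial at at most $d+1$ points. That would contradict $\delta^*=E_d>0$ whenever $f$ is not itself a polynomial of degree at most $d$; the trivial case $E_d=0$ is excluded separately. This argument forces all $d+2$ points to be distinct and every $\lambda_i\neq0$.
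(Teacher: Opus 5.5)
Your proposal is correct and follows essentially the same route as the paper's Appendix~A: dualize the discretized LP, observe that the dual feasible weights are the sign-alternating divided-difference weights, and use complementary slackness to force all $d+2$ constraints to be tight with alternating signs. Your version is somewhat cleaner. You use the correct weights $\lambda_i\propto 1/\prod_{j\ne i}(x_i-x_j)$, whereas the paper's formula carries a stray factor $1/x_i$ and an accompanying positivity assumption on the nodes; you read the alternation directly off the primal optimum $g^*$ rather than solving the resulting linear system; and you also supply the converse direction and the degenerate case, which the paper omits.
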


\begin{proof}
See Appendix \ref{appA}.
\end{proof}

This result implies that in the Chebyshev Equioscillation Theorem, we can fix the degree as the approximate degree $d=\widetilde{\deg}_\varepsilon(f)$ for any desired accuracy $\varepsilon$, e.g., $\varepsilon=1/3$, we then will obtain a min-max polynomial approximation of $f(x)$ with error $\varepsilon$.

\subsection{Constructing periodic Jacobi matrices from prescribed spectra}

In this part, we establish a connection between periodic Jacobi matrices and polynomial approximation. The following is a known result about periodic Jacobi matrices.

\begin{prop}[Theorem 4.2 of \cite{ferguson1980construction}]
\label{prop:inverse probvlem of periodic Jacobi matrices}
Let $\Delta(x)$ be a polynomial of degree $m$. Then there is a periodic Jacobi matrix $A_0$ such that its discriminant is $\Delta(x)$ if and only if
\begin{itemize}
    \item the coefficient of $x^m$ in $\Delta(x)$ is positive, and
    \item $\Delta(x)$ has $m-1$ distinct local extrema $\nu_1>\cdots > \nu_{m-1}$ with $(-1)^j \Delta(\nu_j) \geq 1$ for all $j$.
\end{itemize}
Moreover, there is also an efficient algorithm to construct $A_0$ when given $\Delta(x)$.
\end{prop}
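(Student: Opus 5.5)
Since this is a cited classical result (Theorem 4.2 of \cite{ferguson1980construction}), the plan is to reconstruct its proof in two directions. Necessity is routine. Sufficiency reconstructs $A_0$ from spectral data via a Lanczos-type procedure.

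\emph{Necessity.} By \eqref{char poly of A}, $\Delta(x)=\det(xI-A_\theta)/e+\cos\theta$. Since $\det(xI-A_\theta)$ is monic of degree $m$ and $e=2b_1\cdots b_m>0$, the leading coefficient of $\Delta$ is $1/e>0$. Because $A_\theta$ is Hermitian for every $\theta$, the equation $\Delta(x)=c$ has $m$ real roots (with multiplicity) for every $c\in[-1,1]$. Taking $c=\pm1$ (that is, $\theta=0,\pi$), both $\Delta-1$ and $\Delta+1$ have only real roots. A counting/intermediate-value argument then applies: if some critical value of $\Delta$ lay strictly inside $(-1,1)$, a nearby horizontal line $y=c$ would meet the graph in fewer than $m$ real points. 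This forces $\Delta'$ to have $m-1$ real zeros $\nu_1>\cdots>\nu_{m-1}$ with $|\Delta(\nu_j)|\ge 1$. Since the leading coefficient is positive, the largest critical point is a local minimum, which fixes the alternating sign pattern $(-1)^j\Delta(\nu_j)\ge1$.

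\emph{Sufficiency.} Given $\Delta$ with the stated properties, the set $\{x:|\Delta(x)|\le1\}$ is a union of $m$ bands separated by $m-1$ (possibly closed) gaps $G_j\ni\nu_j$. I would first choose auxiliary ``Dirichlet'' data: points $\mu_j\in G_j$, $j=1,\dots,m-1$ (for instance $\mu_j=\nu_j$), with signs $\epsilon_j\in\{\pm1\}$. Set $D(x)=\prod_j(x-\mu_j)$; this is meant to be the characteristic polynomial of $A_0$ with its first row and column deleted. The $(1,1)$ entry of the resolvent, $\langle e_1,(x-A_\theta)^{-1}e_1\rangle$, equals $D(x)/\det(xI-A_\theta)$. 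From this one derives, in the standard way, the spectral weights of $e_1$ relative to a reference tridiagonal truncation. These weights are residues involving $\sqrt{\Delta(\mu_j)^2-1}$.

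The hypothesis $(-1)^j\Delta(\nu_j)\ge1$ is exactly what ensures $\Delta(\mu_j)^2-1\ge0$ and that the weights have the correct signs, so they are positive. With a positive discrete measure and the Lanczos recursion (equivalently, the inverse eigenvalue problem for Jacobi matrices), I would recover $a_1,\dots,a_m$ and $b_1,\dots,b_{m-1}>0$. The remaining parameter $b_m$ is then fixed by the normalization $2b_1\cdots b_m=1/\mathrm{lc}(\Delta)$. Each step (root finding for $\Delta'$, residue computation, Lanczos) costs $\mathrm{poly}(m)$ arithmetic operations, which gives the efficiency claim.

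\emph{Main obstacle.} The delicate step is verifying that the matrix built this way has discriminant exactly $\Delta$, not merely the right leading coefficient. I would do this by comparing $h(x)=\det(xI-B)-b_m^2\det(xI-C)$ from \eqref{1020} with $e\Delta(x)$. Both are degree-$m$ polynomials with the same leading term. The plan is to show they agree at the $m-1$ points $\mu_j$ (via the identity $\det(xI-B)\det(xI-C)'$-type Wronskian relations, which make $h(\mu_j)$ equal to $\pm e\sqrt{\cdot}$ plus the matching term), and additionally in their trace coefficient, which is matched by the choice of $\sum a_i$. If this interpolation count turns out to be one short, I would fall back to the explicit algorithm of \cite{ferguson1980construction} and cite it directly.
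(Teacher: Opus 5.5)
\emph{Overall.} The paper gives no proof of this proposition. It quotes it from Ferguson, and later uses only the sufficiency direction and the efficient-construction clause, in Theorem~\ref{thm:approxiamtion}. So there is no paper proof to follow. Your outline is the classical van Moerbeke--Ferguson inverse-spectral reconstruction that underlies the cited result, and it is sound. What it adds over the bare citation: it shows that the gap inequality $(-1)^j\Delta(\nu_j)\ge 1$ is exactly the condition for the Lanczos weights to be real and positive, and it makes the ``efficient algorithm'' clause concrete.

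\emph{Your flagged obstacle closes.} Throughout, $e$ is the reciprocal of the leading coefficient of $\Delta$; your choice of $b_m$ makes $e=2b_1\cdots b_m$.
\begin{itemize}
\item Notation: let $\bar J$ be $A_0$ with its first row and column deleted, so $D(x)=\det(xI-\bar J)=\prod_j(x-\mu_j)$. Let $D_{\mathrm t}$ and $D_{\mathrm b}$ be the characteristic polynomials of $\bar J$ with its first, respectively last, row and column deleted. Then $D_{\mathrm b}(x)=\det(xI-C)$.
\item Expanding $\det(xI-B)$ along the first row and using $D(\mu_j)=0$ gives $h(\mu_j)=-b_1^2D_{\mathrm t}(\mu_j)-b_m^2D_{\mathrm b}(\mu_j)$.
\item The Desnanot--Jacobi identity at a zero of $D$ gives $D_{\mathrm t}(\mu_j)D_{\mathrm b}(\mu_j)=\prod_{k=2}^{m-1}b_k^2$. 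So the two summands have product $e^2/4$.
\item The Lanczos weights are $x_j^2=D_{\mathrm t}(\mu_j)/D'(\mu_j)$, and you should set
\[
b_1^2x_j^2=\frac{e}{2}\cdot\frac{-\Delta(\mu_j)+\epsilon_j\sqrt{\Delta(\mu_j)^2-1}}{D'(\mu_j)}.
\]
This is positive because $\sgn D'(\mu_j)=(-1)^{j-1}=\sgn\bigl(-\Delta(\mu_j)\bigr)$ for $\mu_j\in G_j$.
\item This is your resolvent computation made explicit. One has $\det(\mu_jI-A_\theta)=-D'(\mu_j)\,|b_1x_j+b_me^{-i\theta}y_j|^2$, where $x_j,y_j$ are the first and last components of the normalized eigenvectors of $\bar J$. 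Comparing the $\theta$-free part and the $\cos\theta$ part gives the sum and product of $b_1^2x_j^2$ and $b_m^2y_j^2$.
\item Hence $b_1^2D_{\mathrm t}(\mu_j)$ is one root of $t^2+e\Delta(\mu_j)t+e^2/4$, $b_m^2D_{\mathrm b}(\mu_j)$ is the other, and $h(\mu_j)=e\Delta(\mu_j)$.
\item Choose $a_1$ so that $\sum_i a_i=-e\,c_{m-1}$, where $c_{m-1}$ is the $x^{m-1}$-coefficient of $\Delta$. Then $h-e\Delta$ has degree at most $m-2$ and vanishes at the $m-1$ distinct $\mu_j$, so it is identically zero.
\end{itemize}
The interpolation count is exactly sufficient, and no fallback to Ferguson is needed.

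\emph{One step needs repair, in the necessity direction.} ``No critical value in $(-1,1)$'' plus a positive leading coefficient does not imply $(-1)^j\Delta(\nu_j)\ge1$. For example, $\Delta(x)=x^3-3x+4$ has local minimum value $2$ and local maximum value $6$. So $|\Delta(\nu_j)|\ge1$ holds, yet the sign pattern fails, and indeed $\Delta=c$ has only one real root for $c\in[-1,1]$.

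Run the crossing count with the correct hypotheses. For generic $c\in[-1,1]$, each strictly monotone piece of $\Delta$ between consecutive local extrema contains at most one root of $\Delta-c$. Each of the following then leaves fewer than $m$ real roots, contradicting Hermiticity of $A_\theta$:
\begin{itemize}
\item fewer than $m-1$ sign changes of $\Delta'$;
\item a local minimum with value $v>-1$ (take generic $c\in[-1,1]$ with $c<v$);
\item a local maximum with value $M<1$ (take generic $c\in[-1,1]$ with $c>M$).
\end{itemize}
This gives both the distinctness of the extrema and the required sign pattern.
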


From \eqref{char poly of A}, suppose $\Delta(x)$ is the discriminant of $A_0$, then it is also the discriminant of $A_\theta$ for all $\theta$. Moreover, $1/e$ is the leading coefficient of $\Delta(x)$. It is easy to check that the second property in Proposition \ref{prop:inverse probvlem of periodic Jacobi matrices} is easily satisfied once $\Delta(x)$ has the oscillating property similar to the Chebyshev Equioscillation Theorem. This means that if $\Delta(x) \approx (f(x) - g(x))/E$ where $f(x), g(x), E$ are the same as those in the Chebyshev Equioscillation Theorem, then $\Delta(x)$ will be a discriminant of some periodic Jacobi matrix. In the Chebyshev Equioscillation Theorem, $g(x)$ is the polynomial with minimal degree $d$ that approximates $f(x)$ up to error $E$, so to ensure a nontrivial approximation, we have to choose $\Delta(x)$ such that $m=\deg(\Delta(x)) > \deg(g)$.

To continue, we recall the following concept from polynomial approximation theory.

\begin{definition}
Let $f(x): [a,b] \subseteq [-1,1]\rightarrow \mathbb{R}$ be a continuous function. For any integer $m$, define
\be \label{eq for Em}
E_m(f;[a,b]):=\min_{\deg(P)\leq m} \,\, \max_{x\in [a,b]} \,\, |f(x) - P(x)|
\ee
to be the best uniform (minimax) polynomial approximation error of degree at most $m$ over $[a,b]$. When there is no risk of confusion, we sometimes simply write it as $E_m$.
The polynomial of degree at most $m$ that attains this minimum is denoted as 
$P_m^*(x)$. 
\end{definition}

In practice, we start from $\varepsilon$, then compute the approximate degree $d=\widetilde{\deg}_{\varepsilon}(f)$, finally we have $E_{d-1}>\varepsilon\ge E_d$ and $d\le \widetilde{\deg}_{E_d}(f)$. Equality holds when $\varepsilon = E_d$. We indeed have $\widetilde{\deg}_\varepsilon(f)=\min\{d:E_d\leq \varepsilon\}$. Moreover, for any $\eta\in [E_d, E_{d-1})$, we always have $d= \widetilde{\deg}_\eta(f)$.

\begin{thm}
\label{thm:approxiamtion}
Let $f(x): [a,b] \subseteq [-1,1]\rightarrow \mathbb{R}$ be a continuous function and $\varepsilon\in(0,1)$ with $d=\widetilde{\deg}_{\varepsilon}(f)$.
For any $\eta\in [E_d, E_{d-1})$, there is a periodic Jacobi matrix $A_\theta$ of the form \eqref{eq:periodic Jacobi matrix} with $\|A_\theta\|\leq 1$, 
dimension $d$ and discriminant $\Delta(x)$, such that 
\be \label{error bound}
\left| \Delta(x) - \frac{f(x) - P_{d-1}^*(x)}{E_{d-1}-\eta} \right| \leq 
\frac{\eta}{E_{d-1}-\eta}.
\ee
\end{thm}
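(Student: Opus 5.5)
The plan is to build the discriminant directly from the two best approximations $P_{d-1}^*$ and $P_d^*$. Set
\[
\Delta(x) := \frac{P_d^*(x) - P_{d-1}^*(x)}{E_{d-1}-\eta}.
\]
Then
\[
\Delta(x) - \frac{f(x)-P_{d-1}^*(x)}{E_{d-1}-\eta} = \frac{P_d^*(x)-f(x)}{E_{d-1}-\eta},
\]
whose absolute value is at most $E_d/(E_{d-1}-\eta)\le \eta/(E_{d-1}-\eta)$. This is exactly \eqref{error bound}. The remaining work is to show that $\Delta$ satisfies the hypotheses of Proposition~\ref{prop:inverse probvlem of periodic Jacobi matrices}, and that the resulting matrix has norm at most $1$.

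\emph{Degree.} Since $\eta\in[E_d,E_{d-1})$, we have $E_d<E_{d-1}$. So $P_d^*$ cannot have degree $\le d-1$, because otherwise it would equal $P_{d-1}^*$ by uniqueness. Hence $\deg \Delta = d$ exactly.

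\emph{Alternation.} By Theorem~\ref{thm:Chebyshev Equioscillation Theorem} applied with degree $d-1$, there are points $a\le x_1<\cdots<x_{d+1}\le b$ with $f(x_i)-P_{d-1}^*(x_i)=\sigma(-1)^iE_{d-1}$. At these points the error bound gives
\[
\sigma(-1)^i\Delta(x_i)\ \ge\ \frac{E_{d-1}-\eta}{E_{d-1}-\eta}=1 .
\]
So $\Delta$ alternates between values $\ge 1$ and $\le -1$ at $d+1$ ordered points. Between consecutive $x_i$ there is therefore a local extremum with $|\Delta|\ge 1$ and the appropriate alternating sign, giving $d-1$ extrema. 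Since $\deg\Delta=d$, $\Delta'$ has at most $d-1$ zeros, so these are all the critical points, they are distinct, and they are nondegenerate.

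\emph{Sign conventions.} The ordering $\nu_1>\cdots>\nu_{d-1}$ with $(-1)^j\Delta(\nu_j)\ge1$ follows once the leading coefficient is positive: $\Delta\to+\infty$ to the right of $\nu_1$, so $\nu_1$ is a local minimum with value $\le -1$. If the leading coefficient is negative (the case $\sigma$ of the wrong sign), I would apply Proposition~\ref{prop:inverse probvlem of periodic Jacobi matrices} to $-\Delta$ and then flip the sign of $b_m$. In \eqref{1020}, $h$ depends only on $b_m^2$ while $e=2b_1\cdots b_m$ changes sign, so the discriminant becomes $\Delta$ again. Equivalently, this replaces $\theta$ by $\theta+\pi$, and the matrix keeps the form \eqref{eq:periodic Jacobi matrix} up to this harmless sign convention, allowing $b_t\in\mathbb{R}^*$ as in the introduction. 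I expect this sign/normalization bookkeeping to be the main point needing care. Proposition~\ref{prop:inverse probvlem of periodic Jacobi matrices} then yields $A_0$ of dimension $d$ with discriminant $\Delta$, and by \eqref{char poly of A} the same $\Delta$ serves every $A_\theta$.

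\emph{Norm bound.} The eigenvalues of $A_\theta$ are the $d$ real roots of $\Delta(x)=\cos\theta$, so they lie in $\{x: |\Delta(x)|\le 1\}$. To the right of $x_{d+1}$, $\Delta$ has no critical points (all $d-1$ lie strictly between $x_1$ and $x_{d+1}$) and $|\Delta(x_{d+1})|\ge1$, so $|\Delta|$ increases strictly beyond $1$ there; the same holds to the left of $x_1$. Hence the set $\{|\Delta|\le1\}$ is contained in $[x_1,x_{d+1}]\subseteq[a,b]\subseteq[-1,1]$, and since $A_\theta$ is Hermitian, $\|A_\theta\|\le 1$ for all $\theta$.
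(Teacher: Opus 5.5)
Your proposal is correct and is essentially the paper's proof. The paper defines $\Delta=(g-P_{d-1}^*)/(E_{d-1}-\eta)$ for any degree-$d$ polynomial $g$ with $\|f-g\|_\infty\le\eta$, of which your choice $g=P_d^*$ is a special case, and then runs the same sequence of steps: sign alternation at the $d+1$ equioscillation points, IVT and Rolle to obtain $d-1$ extrema, Proposition~\ref{prop:inverse probvlem of periodic Jacobi matrices}, and localization of the roots of $\Delta-\cos\theta$ in $[x_1,x_{d+1}]$. Two small remarks: each extremum lies around an interior $x_i$ (e.g.\ between consecutive zeros of $\Delta$), not ``between consecutive $x_i$''; and your fix of negating $b_m$ for a negative leading coefficient is more careful than the paper's ``without loss of generality,'' which read literally would not preserve \eqref{error bound}.
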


\begin{proof}
Let $g(x)$ denote the degree $d$ polynomial in computing $d=\widetilde{\deg}_\eta(f)$.
Without loss of generality, we assume that the leading coefficient of $g(x)$ is positive; otherwise, $\Delta(x)$ below should be multiplied by the sign of this leading coefficient.
Denote
\begin{align}
\label{eq:def-Delta}
    \Delta(x) = \frac{g(x)-P_{d-1}^*(x)}{E_{d-1}-\eta},
\end{align} 
then it satisfies 
\begin{align}
\label{eq:decomposition}
    f(x) = (E_{d-1}-\eta) \Delta(x)+ P_{d-1}^*(x)+ (f(x)-g(x)).
\end{align}
By Theorem~\ref{thm:Chebyshev Equioscillation Theorem}, there exist $d+1$ points $a\le x_1<\cdots <x_{d+1}\le b$ and $\sigma\in \{-1,1\}$ such that $f(x_i)-P_{d-1}^*(x_i) = \sigma(-1)^i E_{d-1}$. 
Without loss of generality, we also set $\sigma=1$.
Since $|g(x_i)-f(x_i)|\le \eta < E_{d-1}$, by \eqref{eq:decomposition} the sign of $\Delta(x_i)$ satisfies  
\be
    \sgn(\Delta(x_i)) = \sgn\Big( (-1)^i E_{d-1}+g(x_i)-f(x_i) \Big) =  (-1)^i,
\ee
and hence oscillates between $-1$ and 1. By the Intermediate Value Theorem, there exist $d$ points $\{\xi_{i}\}_{i=1}^{d}$ satisfying $\Delta(\xi_i) = 0$ and $x_1<\xi_1<x_2<\xi_2<\cdots <\xi_{d}<x_{d+1}$. By Rolle's theorem, there exist $d-1$ points $\{y_i\}_{i=1}^{d-1}$ satisfying $\Delta'(y_i)=0$ and $\xi_1<y_1<\xi_2<y_2 <\cdots < y_{d-1} < \xi_{d}$. 
Since $\Delta'(x)$ is a polynomial with degree $d-1$ and has at most $d-1$ zeros, there are no other zeros other than $\{y_i\}_{i=1}^{d-1}$. Therefore, $\Delta(x)$ has oscillating signs on a sequence of intervals $(-\infty, \xi_1), (\xi_1,\xi_2), \ldots, (\xi_{d}, \infty)$. 
On each interval $(\xi_{i}, \xi_{i+1})$, $\Delta(x)$ attains the maximal/minimal value at $y_i$, so we have 
\be
     |\Delta(y_i)|\ge |\Delta(x_{i+1})| \ge \frac{E_{d-1}-|g(x_i)-f(x_i)|}{E_{d-1}-\eta}\ge\frac{E_{d-1}-\eta}{E_{d-1}-\eta}=1
\ee
as $x_{i+1}\in (\xi_{i}, \xi_{i+1})$. In conclusion, $\Delta(x)$ has $d-1$ extrema points 
$\{y_i\}_{i=1}^{d-1}$ with oscillating signs and $|\Delta(y_i)|\ge 1$. 
Now order the critical points from right to left by setting $\nu_j = y_{d-j}$ for $j=1,2,\ldots,d-1$. Then $\nu_1>\cdots>\nu_{d-1}$. Since the leading coefficient of $\Delta(x)$ is positive,  $\Delta(x) > 0$ for all $x \in(\xi_{d}, \infty)$. So $ (-1)^j\Delta(\nu_j) \ge 1$ for all $j$.
By Proposition~\ref{prop:inverse probvlem of periodic Jacobi matrices}, there exists a periodic Jacobi matrix $A_0$ with discriminant $\Delta(x)$. 
The claim \eqref{error bound} follows directly from \eqref{eq:decomposition}.

Finally, it remains to show $\|A_\theta\|\leq 1$. For any $c\in[-1,1]$, since the values $\Delta(x_i)$ alternate in sign and satisfy 
$|\Delta(x_i)|\geq 1$, we have $(\Delta(x_i)-c)(\Delta(x_{i+1})-c)\le 0$ for all $i$. 
Hence, each closed interval $[x_i, x_{i+1}]$ contains a root of $\Delta(x)-c$. Because $\Delta(x)-c$ has degree $d$, these are all its roots, and all of them lie in $[x_1,x_{d+1}] \subseteq [a,b] \subseteq [-1,1]$. Taking $c = \cos\theta$, we conclude that every eigenvalue of $A_\theta$ lies in $[-1,1]$.
\end{proof}

\begin{rmk}
\label{remark:key}
The above result is a key ingredient in the proofs of our main theorems. We need to choose $\varepsilon$ and $\eta$ as small constants, so that $E_{d-1}$ is also a constant. However, the bound for $E_d$ is not clear. The ideal situation for us is $E_d \ll E_{d-1}$. In this case, we can always find a constant $\eta$ such that $\eta/(E_{d-1}-\eta)$ is a small constant. Even in the worst case, $E_d$ is still a small constant, then as long as $E_d / E_{d-1} < 1/2$, we then have $E_d/(E_{d-1}-E_d)<1$, which is a small constant.
\end{rmk}

Many functions in practice satisfy the condition $E_{d}/E_{d-1}<1/2$ when $d$ is reasonably large. 
Asymptotically, when $f(x)$ is analytic and can be analytically continued to the open Bernstein ellipse $\mathcal{E}_\rho^0:=\{z=\frac{1}{2}(w+w^{-1}): |w|<\rho\}$, then $E_d \approx \rho^{-d}$; see \cite[Theorem 8.2]{Trefethen}.  Hence, heuristically, $E_d/E_{d-1} \approx 1/\rho$.
So the condition $E_d / E_{d-1} < 1/2$ is expected to hold when $\rho > 2$.

To be more exact, we consider three typical functions below as examples.
Recall that for continuous functions, the Chebyshev expansion provides (up to logarithmic factors) near-best polynomial approximation; see \cite[Theorem 16.1]{Trefethen}. In particular,
\be
E_d \leq \|f(x)-f_d(x)\|_\infty \leq c_d E_d,
\quad c_d=4+\frac{4}{\pi^2}\log(d+1),
\ee
where $f_d(x)$ is the degree-$d$ truncated Chebyshev expansion.
Below, we use this to calculate $E_{d}/E_{d-1}$. We denote $\widetilde{E}_d:=\|f(x)-f_d(x)\|_\infty$, and so
$E_{d}/E_{d-1} \leq c_{d-1} \widetilde{E}_{d}/\widetilde{E}_{d-1}$. 
\begin{itemize}
    \item {\bf Exponential function:} The function $f(x)=e^{-\beta x}$ for $\beta>0$ 
    appears in partition function computations \cite{brandao2008entanglement}. For this function, from the Chebyshev expansion, we have 
    $\widetilde{E}_d =\Theta(I_{d+1}(\beta))$, where $I_{d+1}(\beta)$ is the modified Bessel function of the first kind. Using the standard asymptotic behavior of Bessel functions, asymptotically,  when $d\gg \beta$ we have
    $E_{d}/E_{d-1}\leq c_{d-1} \widetilde{E}_{d}/\widetilde{E}_{d-1} = \Theta(\beta(\log d)/d)$, which is smaller than 1/2 when $d>\widetilde{\Omega}(\beta)$.

    \item {\bf Trigonometric function:} The function $f(x)=\sin(tx)$ appears as the imaginary part of $e^{itx}$ in Hamiltonian simulation \cite{gilyen2019qsvt}, where $t>0$ and $x\in[-1,1]$. It is an entire function with Chebyshev expansion $\sin(tx)=2 \sum_{k=0}^\infty (-1)^k J_{2k+1}(t) T_{2k+1}(x)$, where $J_{2k+1}(t)$ are Bessel functions of the first kind. So $\tilde{E}_d \approx \sum_{\text{odd } k>d} |J_k(t)|\approx |J_{d}(t)|$ if $d$ is odd. So when $d\gg t$,  we asymptotically have  $E_{d}/E_{d-1} \leq \Theta(t (\log d)/d)$, which is smaller than 1/2 when $d>\widetilde{\Omega}(t)$. A similar result holds for $\cos(tx)$.

    \item {\bf Logarithmic function:} The function $f(x) = \log(1+\beta x)$ with $0<\beta <1$ in the interval $[-1,1]$, is commonly used in the calculation of log-determinant \cite{han2017approximating}. For this function, we have $\widetilde{E}_d = \Theta(\rho^{-d}/d)$, where $\rho = \frac{1+\sqrt{1-\beta^2}}{\beta}$. Thus $E_{d}/E_{d-1} \leq \Theta( (\log d)/ \rho ) < 1/2$ when $\beta<O(1/ \log(d) )$.

    \item {\bf Inverse-type function:} The function $f(x)=1/\kappa x$ over the interval $[-1,-1/\kappa] \cup [1/\kappa,1]$ is used in solving linear systems \cite{gilyen2019qsvt}, where $\kappa>1$. 
    Now $\rho=\kappa+\sqrt{\kappa^2-1}\approx 2\kappa$.
    In this case, we have $E_d = \Theta(1/(2\kappa)^{d})$, and so $E_{d}/E_{d-1}\leq \Theta({\log (d)}/{\kappa})<1/2$ when $\kappa>\Omega(\log d)$. 
\end{itemize}

\section{Proof of Theorem \ref{thm:intro-main}}
\label{Sec: Proof of main Theorem}

\begin{thm}[Restatement of Theorem \ref{thm:intro-main}]
\label{thm:Restatement of Theorem1}
Let $\varepsilon<1$ be a small constant. Assume that $f(x):[a,b] \subseteq [-1,1]\rightarrow [-1,1]$ is a continuous function with $d:=\widetilde{\deg}_\varepsilon (f)=\Omega(\poly(n))$ and $E_{d}/E_{d-1}<1/2$.
Then the ``normalized trace estimation problem" is DQC1-complete.
\end{thm}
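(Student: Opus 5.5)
Membership and hardness are handled separately. For membership in DQC1, I invoke Proposition~\ref{prop: in DQC1}. The only extra ingredient is a Lipschitz bound, since $f$ is merely continuous. I plan to replace $f$ by its best approximation $P_{d'}^*$ of polynomial degree. A suitable $d'$ is $\widetilde{\deg}_{\varepsilon'}(f)$ for a small constant $\varepsilon'$, or $\poly(n)$ if needed, with error $\le 1/12$. This polynomial is bounded on $[-1,1]$, after a mild rescaling using the fact that the spectrum lies in $[a,b]$. By Markov's inequality its Lipschitz constant is $O(\poly(n))$, so the DQC1 algorithm estimates $2^{-n}\tr[P(A)]$, and hence $2^{-n}\tr[f(A)]$, to within $1/6$.

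For hardness, I reduce from Proposition~\ref{prop: TrU is DQC1 complete}. Take a circuit $U=U_m\cdots U_1$ with log-local gates and pad it with identity gates so that the number of gates equals $d$ (or a multiple of $d$; see below). Choose $\eta\in[E_d,E_{d-1})$ with $\eta/(E_{d-1}-\eta)<1$, which is possible by the hypothesis $E_d/E_{d-1}<1/2$ and Remark~\ref{remark:key}. Theorem~\ref{thm:approxiamtion} then supplies a $d$-dimensional periodic Jacobi matrix with parameters $a_t,b_t$, norm at most $1$, and discriminant $\Delta$ satisfying \eqref{error bound}. These parameters must be computable in $\poly(n)$ time, which follows from the Remez algorithm together with the constructive part of Proposition~\ref{prop:inverse probvlem of periodic Jacobi matrices}, to sufficient precision. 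Plugging them into the Hamiltonian $A$ of the technical overview gives a log-local Hamiltonian on $\log d+n$ qubits, with the clock register encoded in binary. Its spectrum lies in $[a,b]$ because every $A_\theta$ has its eigenvalues in $[x_1,x_{d+1}]$. The decomposition \eqref{intro:trace decomposition} holds, and I also need $\tr[\Delta(A_\theta)]=d\cos\theta$. For the latter I would compute $\tr[\Delta(A_\theta)]=\sum_\lambda \Delta(\lambda)$ over eigenvalues $\lambda$, each satisfying $\Delta(\lambda)=\cos\theta$, giving $d\cos\theta$ immediately.

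Writing $f=(E_{d-1}-\eta)\Delta+P_{d-1}^*+r$ with $\|r\|_\infty\le\eta$, I get
\[
\frac{1}{d2^n}\tr[f(A)]=(E_{d-1}-\eta)\frac{\Re\tr[U]}{2^n}+\frac{1}{d2^n}\sum_\theta\tr[P_{d-1}^*(A_\theta)]+O(\eta).
\]
The middle term is independent of $\theta$. Each entry of $A_\theta^k$ with $k<d$ cannot wrap fully around the cycle of length $d$, so no factor $e^{\pm i\theta}$ survives on the diagonal. Hence the middle term equals $\tr[P_{d-1}^*(A_0)]/d$, which is classically computable in $\poly(n)$ time. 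Thus an estimate of $\frac{1}{d2^n}\tr[f(A)]$ to additive accuracy $\delta$ yields $2^{-n}\Re\tr[U]$ to accuracy $(\delta+\eta)/(E_{d-1}-\eta)$.

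The main obstacle is this error accounting. The signal coefficient $E_{d-1}-\eta$ is only a constant, while the residual $\eta$ is of comparable size, so the bound $\eta/(E_{d-1}-\eta)<1$ alone is not obviously enough to distinguish $\Re\tr[U]/2^n>c$ from $<-c$. My first attempt is to amplify the gap. I would use a circuit whose normalized trace is close to $\pm1$, obtained by taking traces of tensor powers, or by padding and using the DQC1-hardness of deciding $\Re\tr[U]/2^n\ge 1-\gamma$ versus $\le -1+\gamma$. The signal then has size nearly $E_{d-1}-\eta$, versus the error $\eta+\delta$. The normalization by $d$ (the clock dimension) must also be matched with the problem's $2^{-n'}$ normalization: the extra clock qubits give exactly a factor $d$ when $d$ is a power of two, which I arrange by padding. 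Finally, the thresholds $1/3$ and $2/3$ need to be matched by an affine shift, which the known offset term and a rescaling of $f$ permit.
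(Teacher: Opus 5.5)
Your proposal is correct in substance and follows the paper's proof. It uses the same clock Hamiltonian splitting into periodic Jacobi blocks $A_\theta$, the discriminant $\Delta$ from Theorem~\ref{thm:approxiamtion} with $\tr[\Delta(A_\theta)]=d\cos\theta$ and a $\theta$-independent $\tr[P_{d-1}^*(A_\theta)]$, and the same accuracy $(\delta+\eta)/(E_{d-1}-\eta)<1$ fed into Proposition~\ref{prop: TrU is DQC1 complete} at a threshold close to $1$; your second amplification option is exactly the paper's step, and you should drop the tensor-power idea, since $\tr[U^{\otimes k}]/2^{nk}=(\tr[U]/2^n)^k$ shrinks toward $0$, not $\pm1$. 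Your Markov-inequality argument for membership usefully fleshes out the paper's bare appeal to Proposition~\ref{prop: in DQC1}, but padding gates cannot make $d$ a power of two (assign unused clock states a fixed scalar instead, which gives a known offset), and $f$ cannot be rescaled, so, as in the paper, you actually get hardness only for some small constant accuracy depending on $E_{d-1}-\eta$, not the literal $1/3$ versus $2/3$ thresholds.
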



\begin{proof}
By Proposition \ref{prop: in DQC1}, the problem is in DQC1. 
Next, we show that it is DQC1-hard. Let $U=U_m\cdots U_2 U_1$ be a unitary acting on $n$ qubits such that each $U_i$ acts on at most $O(\log(n))$ qubits, with $m=O(\poly(n))$.
In the proof below, we first assume $m=d$ exactly, and will discuss how to remove this assumption at the end of the proof.

Consider
\be
A=
\sum_{t=1}^m a_t \ket{t-1} \bra{t-1} \otimes I +
\sum_{t=1}^m b_t \Big(\ket{t} \bra{t-1} \otimes U_t + \ket{t-1} \bra{t} \otimes U_t^\dag \Big),
\ee
where we set $\ket{m}=\ket{0}$.
Let $\ket{\phi_\theta}$ be an eigenstate of $U$, i.e., $U\ket{\phi_\theta} = e^{i\theta} \ket{\phi_\theta}$, and let
\bea
\ket{\psi_0(\theta)} &=& \ket{0} \otimes \ket{\phi_\theta}, \\
\ket{\psi_t(\theta)} &=& \ket{t} \otimes U_t\cdots U_1 \ket{\phi_\theta}, \quad \quad 
t=1,\ldots,m-1.
\eea
Then it is easy to check that
\bea
A \ket{\psi_0(\theta)} &=& 
a_1 \ket{\psi_0(\theta)} +
b_1 \ket{\psi_1(\theta)} + b_m \ket{m-1} \otimes U_m^\dag \ket{\phi_\theta}  \\
&=& a_1 \ket{\psi_0(\theta)} + b_1 \ket{\psi_1(\theta)} + b_m e^{-i\theta} \ket{\psi_{m-1}(\theta)} , \\
A  \ket{\psi_t(\theta)} &=& 
a_t \ket{\psi_t(\theta)} +
b_{t+1} \ket{\psi_{t+1}(\theta)} + b_{t-1} \ket{\psi_{t-1}(\theta)}, \quad\quad t=1,2,\ldots,m-2, \\
A  \ket{\psi_{m-1}(\theta)} &=& 
a_m \ket{\psi_{m-1}(\theta)} +
b_{m-1} \ket{\psi_{m-2}(\theta)} + b_{m} \ket{m} \otimes U_m \cdots U_1 \ket{\phi_\theta} \\
&=& a_m \ket{\psi_{m-1}(\theta)} +
b_{m-1} \ket{\psi_{m-2}(\theta)} + b_{m} e^{i\theta} \ket{\psi_{0}(\theta)}  .
\eea
Thus in the basis 
$\{\ket{\psi_t(\theta)}:t=0,\ldots,m-1\}$, the operator $A$ is a periodic Jacobi operator of the following form
\be
A_\theta := \begin{bmatrix}
a_1   & b_1 &           &        & b_me^{i\theta} \\
b_1  & a_2  & b_2       &        \\
     & b_2  &  \ddots   & \ddots \\
     &      & \ddots    & \ddots & b_{m-1}\\
b_m e^{-i\theta}        &&       & b_{m-1} & a_m \\
\end{bmatrix}.
\ee
Due to the orthogonality of $\{\ket{\psi_t(\theta)}\}_{t,\theta}$, we have that
\be
\label{1027:eq}
\tr[f(A)] = \sum_{\substack{\theta: \, e^{i\theta} \text{ is an} \\ \text{ eigenvalue of } U}} \tr[f(A_\theta)].
\ee

By Theorem \ref{thm:approxiamtion}, we can construct a periodic Jacobi matrix $A_\theta$ such that the dimension is $m=d$ and discriminant $\Delta(x)$. So the characteristic polynomial is $\det(xI_m-A_\theta)= e (\Delta(x) - \cos(\theta))$, where $e=2\prod_j b_j$. 
By \eqref{error bound}, we choose $\eta\in[E_d, E_{d-1})$ such that $E_{d-1}-\eta$ is a constant and
\be 
\left| \Delta(x) - \frac{f(x) - P_{d-1}^*(x)}{E_{d-1}-\eta} \right| \leq 
\frac{\eta}{E_{d-1}-\eta} =: \eta',
\ee
By Remark \ref{remark:key}, without loss of generality, we assume that $\eta'$ is a small constant when $E_{d}/E_{d-1}<1/2$.
The above implies that
\be
\Big| 
\tr[f(A_\theta)] - \tr[P_{d-1}^*(A_\theta)] - 
(E_{d-1}-\eta) \tr[\Delta(A_\theta)]  \Big| \leq (E_{d-1}-\eta) \eta'  m.
\ee
As a discriminant polynomial, we have that $\Delta(A_\theta) = \cos(\theta) I_m$, i.e., $\tr[\Delta(A_\theta)] = m \cos\theta$. So we have
\be \label{error 1}
\Big| 
\tr[f(A_\theta)] - \tr[P_{d-1}^*(A_\theta)] - 
(E_{d-1}-\eta) m \cos\theta \Big| \leq (E_{d-1}-\eta) \eta' m.
\ee

Since $P_{d-1}^*(x)$ has degree $d-1$, which is smaller than the size $m=d$ of $A_\theta$, we have that $\tr[P_{d-1}^*(A_\theta)]$ is independent of $\theta$ and can be computed efficiently and exactly. Thus, we can simply denote it as $\tr[P_{d-1}^*(A_0)]$ by setting $\theta=0$. One simple way to see this is using random walks. Indeed, $A_\theta$ can be viewed as the adjacency matrix of a circle of $m$ nodes. So, when doing random walks of length less than $m$, summing all the weights to compute the trace, the overall weight for vertex $i$ is the sum of weights of all walks that pass through the edge between vertex $1$ and vertex $m$ for even times, back and forth.

As a result of \eqref{1027:eq} and \eqref{error 1}, we have
\bea
\tr[f(A)] &=& \sum_{\substack{\theta: \, e^{i\theta} \text{ is an} \\ \text{ eigenvalue of } U}}  
\Big (m (E_{d-1}-\eta) (\cos \theta \pm \eta') + \tr[P_{d-1}^*(A_0)] \Big) \\
 &=& m (E_{d-1}-\eta) (\Re \tr{U} \pm 2^n \eta') + 2^n \tr[P_{d-1}^*(A_0)].
\eea
Notice that the dimension of $A$ is $m2^n$, it follows that
\be \label{error2}
\frac{\tr[f(A)]}{m2^n}
= (E_{d-1}-\eta) \left( \frac{\Re\tr(U)}{2^n} \pm \eta' \right) + 
\frac{\tr[P_{d-1}^*(A_0)]}{m}.
\ee 
Thus, if we can estimate $\frac{\tr[f(A)]}{m2^n} \pm \xi$ for some small constant $\xi$, we then can estimate $\frac{\Re\tr(U)}{2^n} \pm \eta''$ with $\eta'' := \eta' + \frac{\xi}{E_{d-1} - \eta}$. Since $\eta', E_{d-1}-\eta$ are constants smaller than 1, we can choose $\xi$ small enough to ensure that $\eta''$ is also a constant smaller than 1. 
By Proposition \ref{prop: TrU is DQC1 complete}, estimating $ \frac{\Re\tr(U)}{2^n} \pm \eta''$ is DQC1-hard, so estimating $\frac{\tr[f(A)]}{m2^n}\pm \xi$ is also DQC1-hard.

In the above proof, we used the property that $m=d$. But $m$ is only determined by $U$, this condition can be easily satisfied if we have $\widetilde{\deg}_{\varepsilon}(f) >  m = \poly(n)$ because we can add identities into $U$. This is why we assumed that $\widetilde{\deg}_{\varepsilon}(f) = \Omega(\poly(n))$ in the theorem.
\end{proof}

\section{Lower bound on classical query complexity }

In this section, we apply the connection established above to prove an exponential lower bound on the classical query complexity for estimating the normalized trace of $f(A)$. This establishes an exponential quantum--classical separation.

Recall that in \cite[Theorem 1.1]{edenhofer2025dequantization}, given sparse access to an $s$-sparse Hermitian matrix $A$ with $\|A\| \le 1$, and a degree-$d$ polynomial $f(x)$ satisfying $|f(x)| \le 1$ for all $x \in [-1,1]$, there exists a classical algorithm with query complexity $O(s^d/\varepsilon^2)$ that outputs an $\varepsilon$-approximation of the normalized trace of $f(A)$. The oracles are standard: one oracle returns the nonzero entries of $A$, while the other returns the positions of the nonzero entries in each row. Basically, their algorithm works as follows: for each matrix monomial $A^k$, sparsity allows us to estimate one diagonal entry efficiently (for example, using the definition, which costs $O(s^k)$). Since the normalized trace is the expectation value of the diagonal entries, $O(1/\varepsilon^2)$ samples are enough to approximate it up to an additive error $\varepsilon$.

The result of \cite{edenhofer2025dequantization} also holds when $f(x)$ is continuous on $[-1,1]$. In this case, one may first approximate $f(x)$ by a degree-$d$ polynomial and then apply their result. Since the complexity depends on the degree, it is desirable that this approximation has low degree. The minimum degree required for such an approximation is known as the approximate degree $\widetilde{\deg}(f)$, defined in~\eqref{app deg}. In Theorem~\ref{thm:lower bound}, we prove a lower bound showing that the exponential dependence on the approximate degree is necessary.

To prove the lower bound, we first extend the $k$-Forrelation problem~\cite{aaronson2015forrelation} to the DQC1 query model and prove an exponential lower bound, which we then use to obtain the desired result.

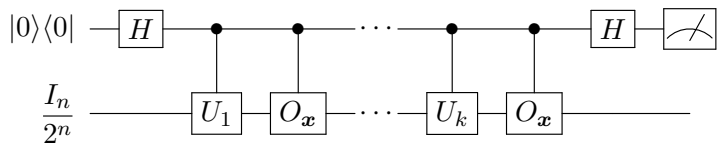
\begin{figure}[h]
\[
\Qcircuit @C=1em @R=1.5em {
\lstick{\ket{0}\bra{0}} &   \gate{H} & \ctrl{1}  & \ctrl{1}     & \qw&  \push{\kern-0.85em\mbox{$\cdots$}}  & \ctrl{1}  & \ctrl{1}      & \gate{H}  & \meter \\
\lstick{\displaystyle \frac{I_n}{2^n}}   &  \qw       & \gate{U_1} & \gate{O_{\x}} & \qw&  \push{\kern-0.85em\mbox{$\cdots$}} & \gate{U_k} & \gate{O_{\x}}  &  \qw  &  \qw  \\
}
\]
\caption{A $k$-query DQC1 algorithm.}
\label{circuit}
\end{figure}

The DQC1 query model has a structure similar to the BQP query model; see Figure~\ref{circuit}. Let $O_{\x}$ be an oracle defined by an input $\x=(x_1,\ldots,x_N) \in \{\pm 1\}^{N}$ via
\be
O_{\x}\ket{i} = x_i \ket{i},
\ee
where $N=2^n$. The goal is to compute a function of $\x$ using as few oracle queries as possible.
In a $k$-query DQC1 algorithm, we apply a unitary
\be
U = O_{\x} U_k \cdots O_{\x} U_2 O_{\x} U_1,
\ee
where $U_1,\ldots,U_k$ are independent of $O_{\x}$. A natural quantity that can be computed in this model is the normalized trace $\frac{1}{N}\tr[U]$.
We can also consider a more general setting in which different oracles $O_{\x_1}, O_{\x_2}, \ldots, O_{\x_k}$ are used, with inputs $\x_1, \ldots, \x_k \in \{\pm 1\}^N$.

We now consider an extension of the $k$-Forrelation problem in the DQC1 query model. Define the normalized trace function
\begin{equation}
{\tt Trace}_k(\x) :=
\frac{1}{N} \tr\!\left[
O_{\x_1} H^{\otimes n}
O_{\x_2} H^{\otimes n}
\cdots
H^{\otimes n}
O_{\x_k} H^{\otimes n}
\right],
\end{equation}
where $H$ is the $2\times 2$ Hadamard gate.
Recall that the function studied in the $k$-Forrelation problem corresponds to a single entry of the unitary
$H^{\otimes n} O_{\x_1} H^{\otimes n} \cdots H^{\otimes n} O_{\x_k} H^{\otimes n}.$
Thus, the above function can be viewed as a natural generalization of the $k$-Forrelation problem.

With Proposition \ref{prop: lower bound of trace 4}, we now prove the classical lower bound of normalized trace estimation.

\begin{thm}[Restatement of Theorem \ref{thm: intro lower bound}]
\label{thm:lower bound}
Let $s$ be an integer, and $\varepsilon<1$ be a small constant. Assume that $f(x):[a,b] \subseteq [-1,1]\rightarrow [-1,1]$ is a continuous function with $d:=\widetilde{\deg}_\varepsilon (f)=\Omega(\poly(n))$ and $E_{d}/E_{d-1}<1/2$.
Then there is an $s$-sparse, log-local Hamiltonian $A$ such that 
\be
\widetilde{\Omega} \left(
(s/2)^{(\widetilde{\deg}_\varepsilon(f)-4)/16}
\right)
\ee
classical queries are required to estimate the normalized trace of $f(A)$ up to a constant accuracy.
\end{thm}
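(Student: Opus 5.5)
The plan is to reduce the estimation of ${\tt Trace}_4$ (Proposition~\ref{prop: lower bound of trace 4}) to normalized trace estimation of $f(A)$, using the circuit-to-Hamiltonian construction from the proof of Theorem~\ref{thm:Restatement of Theorem1}. The reduction has to be query-efficient, and the exponent must come out as $(d-4)/16$ in terms of $s/2$.

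\textbf{Step 1 (the circuit).} Set $N=2^n$ and consider
\[
U=O_{\x_1}H^{\otimes n}O_{\x_2}H^{\otimes n}O_{\x_3}H^{\otimes n}O_{\x_4}H^{\otimes n}.
\]
The goal is to write $U$ as a product $U_m\cdots U_1$ of $m=d$ factors, each of which is sparse and log-local, so that the Hamiltonian $A$ of Section~\ref{Sec: Proof of main Theorem} is $s$-sparse. The oracles $O_{\x_j}$ are diagonal, hence $1$-sparse. Each $H^{\otimes n}$ is dense, so I will split it as $H^{\otimes n}=\prod_{\ell}H^{\otimes r}_{(\ell)}$, where each block $H^{\otimes r}$ acts on $r=\log_2 s$ qubits. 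A single block is $s$-sparse and $O(\log n)$-local provided $s\le\poly(n)$. Each $H^{\otimes n}$ then costs $n/r$ factors, and the remaining factors up to $m=d$ are padded with identities. This is allowed because $d=\Omega(\poly(n))$.

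In $A$, row $\ket{t}\otimes\ket{i}$ has nonzero entries only in the blocks $U_t$ and $U_{t+1}^\dagger$ together with the diagonal. The sparsity is therefore about $2s$, which explains the $s/2$ in the statement.

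\textbf{Step 2 (simulating the oracles).} Each query to $A$, that is, to $\mathcal O_1$ or $\mathcal O_2$, can be answered with at most $O(1)$ queries to the $\x_j$, since an entry of a block $U_t$ involves at most one oracle value. Consequently a classical algorithm that estimates $\frac{1}{m2^n}\tr[f(A)]$ to constant accuracy using $Q$ queries yields an algorithm for ${\tt Trace}_4$ using $O(Q)$ queries.

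\textbf{Step 3 (accuracy).} Equation~\eqref{error2} gives
\[
\frac{\tr[f(A)]}{m2^n}=(E_{d-1}-\eta)\Big(\frac{\Re\tr U}{N}\pm\eta'\Big)+\frac{\tr[P^*_{d-1}(A_0)]}{m}.
\]
The last term must be computed without querying the $\x_j$. It depends only on $A_0$, the $m\times m$ periodic Jacobi matrix with $\theta=0$ built from $\Delta$ via Theorem~\ref{thm:approxiamtion}, so this holds. Constant accuracy for $f$ therefore yields constant accuracy for $\Re\,{\tt Trace}_4$. The imaginary part is handled the same way using $iU$, or is absent because everything is real here. Proposition~\ref{prop: lower bound of trace 4} then gives $Q=\widetilde\Omega(N^{1/4})$.

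\textbf{Step 4 (expressing the bound in $d$ and $s$), the main obstacle.} The bound must be written in terms of $d$ and $s$ rather than $N$. The construction needs $m=4+4n/r\le d$. Choosing $n$ as large as possible, $n\approx (d-4)r/4$, gives $N^{1/4}=2^{n/4}=s^{(d-4)/16}$. Replacing $s$ by the effective sparsity $s/2$ yields $\widetilde\Omega((s/2)^{(d-4)/16})$.

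The delicate points are these. First, $n$ is then tied to $d$, while the hypothesis requires $d=\Omega(\poly(n))$; I would handle this by letting $f$ and $d$ be fixed first, then choosing the instance size $n$ from $d$, and padding the remaining qubits. Second, each Hadamard block must have sparsity exactly $s/2$, so that the total row sparsity is at most $s$. Third, the bookkeeping must confirm that padding with identity steps changes neither the sparsity nor the query count.
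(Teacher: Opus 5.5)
Your proposal is correct and follows the paper's proof essentially step for step. Both use the same ${\tt Trace}_4$ circuit, split $H^{\otimes n}$ into $n/r$ blocks $H^{\otimes r}$ with $r=\log_2(s/2)$ (so that $A$ is roughly $s$-sparse), apply \eqref{error2} with $m=4n/r+4=d$, and compute $2^{n/4}=(s/2)^{(d-4)/16}$. Your Step~2 also makes explicit two points the paper leaves implicit: each entry or position query to $A$ costs only $O(1)$ queries to the $\x_j$, and the term $\tr[P^*_{d-1}(A_0)]$ requires no queries.
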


\begin{proof}
We consider $k=4$ in ${\tt Trace}_k$. Now we have
\be
U = O_{\x_1} H^{\otimes n} O_{\x_2} H^{\otimes n} O_{\x_3} H^{\otimes n} O_{\x_4} H^{\otimes n} .
\ee
For convenience, we assume the sparsity $s=2^{r+1}$ and $n=r \ell$. The analysis below does not change too much if $r$ is not a factor of $n$.

We decompose $H^{\otimes n}$ as (this is the only place we used the result $n=r\ell$)
\be
H^{\otimes n} = (H^{\otimes r}\otimes I \otimes \cdots \otimes I) (I\otimes H^{\otimes r} \otimes \cdots \otimes I)  \cdots (I\otimes I \otimes \cdots \otimes H^{\otimes r}).  
\ee
There are $\ell$ factors in the above decomposition. Each factor is $2^r$ sparse. As a result, $U$ can be rewritten as a product of $4\ell$ $2^r$-sparse unitaries and $4$ 1-sparse unitaries. Namely, there are $m=4\ell+4$ unitaries in the decomposition of $U$.

Following the proof of Theorem \ref{thm:Restatement of Theorem1}, we can construct a Hamiltonian $A$, which is $2^{r+1}$-sparse and $O(r+\log n)$-local, such that (see \eqref{error2})
\be 
\frac{\tr[f(A)]}{m2^n}
= (E_{d-1}-\varepsilon) \left( \frac{\Re\tr(U)}{2^n} \pm \eta \right) + 
\frac{\tr[P_{d-1}^*(A_0)]}{m}.
\ee
Thus, if we can estimate $\frac{\tr[f(A)]}{m2^n} \pm \xi$ for some small constant $\xi$, we then can estimate $\frac{\Re\tr(U)}{2^n} \pm \eta'$ with $\eta' := \eta + \frac{\xi}{E_{d-1} - \varepsilon}$ a constant.
Moreover, $m=\widetilde{\deg}_\varepsilon(f)$, i.e., $\ell = (\widetilde{\deg}_\varepsilon(f)-4)/4$ and $n=r\ell = (\log (s/2))(\widetilde{\deg}_\varepsilon(f)-4)/4$. By Proposition \ref{prop: lower bound of trace 4}, we obtain a lower bound of estimating $\frac{1}{m2^n} \tr[f(A)]$, which is
\be
\widetilde{\Omega}\left( 2^{n/4} \right)
=
\widetilde{\Omega}\left( 2^{(\log (s/2))(\widetilde{\deg}(f)-4)/16}  \right)
=\widetilde{\Omega} \left(
(s/2)^{(\widetilde{\deg}(f)-4)/16}
\right).
\ee
This completes the proof.
\end{proof}

In comparison, the quantum algorithm only costs $\widetilde{O}(s\cdot \widetilde{\deg}_\varepsilon(f))$, see \cite{zhang2026low}. As a result, there is an exponential separation between quantum and classical algorithms for normalized trace estimation. 

Finally, inspired by the BQP-completeness of the forrelation problem in \cite{aaronson2015forrelation}, we show the following result, which might be of independent interest.

\begin{prop}[Restatement of Proposition \ref{prop:trace DQC1 complete-intro}]
Estimating ${\tt Trace}_{k}$ up to a constant accuracy is DQC1-complete. 
\end{prop}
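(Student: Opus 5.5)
Membership and hardness are argued separately; hardness is the main part.

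\emph{Membership.} The DQC1 circuit of Figure~\ref{circuit for DQC1} is used with $U=O_{\x_1}H^{\otimes n}O_{\x_2}H^{\otimes n}\cdots O_{\x_k}H^{\otimes n}$. Each controlled-$O_{\x_j}$ is one controlled oracle query. Each controlled $H^{\otimes n}$ is $n$ single-qubit controlled gates. The acceptance probability is $\frac12+\frac12\Re\,{\tt Trace}_k(\x)$, and $O(1)$ repetitions suffice for constant accuracy. Since $H^{\otimes n}$ and the $O_{\x_j}$ are real, ${\tt Trace}_k$ is real, so no imaginary part is needed.

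\emph{Hardness.} By Proposition~\ref{prop: TrU is DQC1 complete}, it suffices to reduce estimating $\frac{1}{2^n}\Re\tr[V]$ for an arbitrary polynomial-size circuit $V=V_T\cdots V_1$ to constant accuracy. I would mimic Aaronson--Ambainis's BQP-completeness of Forrelation, which relies on $\{H,\text{diagonal}\}$ being a universal gate pair.

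\emph{Step 1: gate set.} Rewrite $V$ over $\{H, \mathrm{CCZ}, Z\}$ plus diagonal phase gates; here I would use $\{H,\text{Toffoli}\}$, which is universal for real circuits. The real-part trace is preserved by the standard real encoding: doubling with one extra qubit maps $\Re\tr V$ to $\frac12\tr V_{\mathbb R}$. Every gate then becomes either a Hadamard on a subset of qubits or a diagonal $\pm1$ gate (CCZ, CZ, Z), using $\text{Toffoli}=(I\otimes I\otimes H)\,\mathrm{CCZ}\,(I\otimes I\otimes H)$.

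\emph{Step 2: normal form.} Bring the circuit to the alternating form diagonal, $H^{\otimes n}$, diagonal, $H^{\otimes n}$, and so on. A Hadamard on a subset $S$ must be replaced by $H^{\otimes n}$ on all qubits. The trick is to pad with an extra round: $H_S = H^{\otimes n}\cdot(\text{identity diagonal})\cdot H^{\otimes n}$ restricted to the complement. This does not immediately work, so instead I would use the Forrelation approach of carrying each qubit in an encoded form. Concretely, I would follow \cite{aaronson2015forrelation}: each gate layer is written as $H^{\otimes n} D\, H^{\otimes n}$ or $D$. On qubits where $H$ is not wanted, I would use $H^{\otimes n}$ twice with the identity in between. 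Products of $\pm1$ diagonals, and trivial diagonals, are themselves $\pm1$ diagonals, i.e.\ valid $O_{\x_j}$. Every diagonal $\pm1$ gate of polynomial size gives an $\x_j\in\{\pm1\}^N$ that is computable in polynomial time, which is what the oracle model requires. The result is $V'=O_{\x_1}H^{\otimes n}\cdots O_{\x_k}H^{\otimes n}$ with $k=\poly(n)$, $\tr V'=\tr V_{\mathbb R}$, and trace invariant under cyclic shifts to fix the starting layer.

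\emph{Main obstacle.} The hardest step is the exact conversion from subset Hadamards to full $H^{\otimes n}$ layers. A trace, unlike a single Forrelation amplitude, cannot absorb postselection or normalization loss. I would handle this by adding an ancilla register and using the identity $H^{\otimes 2}\,\mathrm{CZ}\,H^{\otimes 2}=\mathrm{CNOT}$-type swaps. Each qubit's pending Hadamard is tracked as a frame, so that applying $H^{\otimes n}$ globally and compensating with diagonal CZ-based swaps realizes $H_S$ exactly, up to the frame. The frame returns to trivial at the end because the number of layers is even.

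If exactness fails, a fallback is to accept a constant-factor damping $c^{k}$, but that would ruin constant accuracy. So the exact frame-tracking construction is the step to get right.
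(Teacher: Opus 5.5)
Your membership argument and the real encoding in Step~1 are fine; the paper simply starts from a circuit $Q$ over $\{H,\mathrm{CCSIGN}\}$. The gap is Step~2, the only nontrivial step: you leave it open, and each mechanism you sketch for it fails.

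\begin{itemize}
\item Inserting $H^{\otimes n}$, the identity, and $H^{\otimes n}$ again gives the identity on every qubit, including the one that should receive $H$.
\item The identity you quote is false. $H^{\otimes 2}\,\mathrm{CZ}\,H^{\otimes 2}=\tfrac12(I+X_1+X_2-X_1X_2)$ is CZ in the $X$ basis, not a CNOT. The CNOT is $(I\otimes H)\,\mathrm{CZ}\,(I\otimes H)$, which needs exactly the single-qubit Hadamard you are trying to eliminate.
\item Frame tracking alone cannot work. A global $H^{\otimes n}$ flips every qubit's frame at once. After a logical $H_i$, qubit $i$'s frame differs from that of every qubit $j$ that previously shared it, and no global layer can repair this. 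A later CCZ acting on both $i$ and $j$ is then diagonal in no slot, so you need a gadget that physically moves qubits, and you never specify one.
\item ``The frame returns to trivial because the number of layers is even'' is unsupported. A trace statement needs an exact operator identity with no leftover qubit permutation, and with the extra register ending in an operator of known nonzero trace.
\end{itemize}

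The paper does not construct this step itself. It invokes the normal form from the proof of Theorem~5.2 in \cite{aaronson2015forrelation}. For $Q$ over $\{H,\mathrm{CCSIGN}\}$ there are $\pm1$ diagonals $O_{f_1},\dots,O_{f_k}$ such that $H^{\otimes(n+1)}O_{f_1}H^{\otimes(n+1)}\cdots O_{f_k}H^{\otimes(n+1)}$ equals $Q$ tensored with a fixed operator on one dummy qubit. Cyclicity of the trace then turns this into $\mathtt{Trace}_{k-1}(f_kf_1,f_2,\dots,f_{k-1})$.

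If you want a self-contained proof, the missing identity is $(\mathrm{CZ}_{ab}H^{\otimes 2})^3=\mathrm{SWAP}_{ab}$. Hence on $N$ qubits
$G_{ab}:=H^{\otimes N}(\mathrm{CZ}_{ab}H^{\otimes N})^3=\mathrm{SWAP}_{ab}(H_a\otimes H_b)$,
which acts as the identity on all other qubits.
\begin{itemize}
\item Applying $G$ to the current positions of qubit $i$ and a dummy implements the logical $H_i$. The cost is an $H$ on the dummy and a relabeling of positions, which is harmless because later diagonal gates are simply relabeled.
\item The product $G_{ab}G_{bc}G_{ab}=\mathrm{SWAP}_{ac}$ gives Hadamard-free swaps, which restore the layout at the end.
\item The dummy then ends with $H^{t}$, where $t$ is the number of Hadamards in $Q$. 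Pad $Q$ with a pair $H_iH_i$ to make $t$ even. Since $\tr H=0$, an odd $t$ would make the normalized trace vanish instead of equalling $2^{-n}\tr Q$. The case $Q'=(Q\otimes I)(I_n\otimes H)$ allowed in the paper's proof has the same issue.
\end{itemize}
With this lemma in place, your outline goes through.
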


\begin{proof}
From the proof of \cite[Theorem 5.2]{aaronson2015forrelation}, Aaronson and Ambainis indeed obtained the following strong result: Let $Q$ be a $n$-qubit quantum circuit over the universal gate set $\{H, {\rm CCSIGN}\}$, where ${\rm CCSIGN}$ is a three-qubit gate that maps $\ket{x,y,z}$ to $(-1)^{xyz} \ket{x,y,z}$. 
Assume there are $m$ Hadamard gates in $Q$.
Then there exist $f_1,\ldots,f_k:\{0,1\}^{n+1}\rightarrow \{1,-1\}$ with $k=O(m)$ such that $Q' = H^{\otimes (n+1)} O_{f_1} \cdots H^{\otimes (n+1)} O_{f_k} H^{\otimes (n+1)}$, where $Q'$ is either $(Q\otimes I)(I_n\otimes H)$ or $Q\otimes I$ depending on the number of appearance of the one dummy qubit in the construction. This implies that
\bea
\frac{1}{2^{n+1}} \tr[Q']= 
\frac{1}{2^{n}} \tr[Q] 
&=& \frac{1}{2^{n+1}} \tr[H^{\otimes (n+1)} O_{f_1} \cdots H^{\otimes (n+1)} O_{f_k} H^{\otimes (n+1)}] \\
&=& \frac{1}{2^{n+1}} \tr[O_{f_k f_1} H^{\otimes (n+1)} \cdots O_{f_{k-1}} H^{\otimes (n+1)} ] \\
&=& {\tt Trace}_{k-1}(f_kf_1,f_2,\ldots,f_{k-1}).
\eea
Since $Q$ comes from a universal gate set, estimating $\frac{1}{2^n} \tr[Q]$ is DQC1-complete. As a result, estimating ${\tt Trace}_{k}$ is also DQC1-complete. 
\end{proof}

\section{Conclusions}

In this work, we provide a general characterization of when estimating the normalized trace of functions of log-local Hamiltonians is DQC1-complete. We also prove an exponential classical lower bound for this problem. Together, these results provide a near-complete characterization of the computational hardness of normalized trace estimation in the DQC1 model. Moreover, both results identify the approximate degree as a key parameter governing the complexity of this problem.

We conclude with several interesting directions for future investigation:
\begin{enumerate}
    \item The DQC1 algorithm proposed in~\cite{cade} for estimating the normalized trace of $f(A)$ applies to any $K$-Lipschitz function with $K = \mathrm{poly}(n)$. This Lipschitz condition arises from the use of quantum phase estimation. Can the assumptions on the function in our main DQC1-hardness theorem \ref{thm:intro-main} be weakened? For example, can the result be extended to all continuous functions? We conjecture that this is the case. However, resolving this question would likely require new techniques to improve the approximation result in Theorem~\ref{thm:approxiamtion}, in particular to remove the condition $E_d/E_{d-1} < 1/2$. 

    \item It remains interesting to determine the optimal classical query complexity of $\mathrm{Trace}_k$. Proposition~1.5 gives an $\widetilde{\Omega}(N^{1/4})$ lower bound for $\mathrm{Trace}_4$, which is sufficient for establishing our exponential quantum--classical separation. However, this bound does not match the conjectured $\widetilde{\Omega}(N^{1/2})$ bound from Conjecture~1.4. More generally, resolving Conjecture~1.4 remains an interesting open problem.
\end{enumerate}

As studied in \cite{shao-gpt}, with the help of GPT 5.6 Sol, the condition $E_d/E_{d-1}<1/2$ can be relaxed to a Lipschitz condition. From our point of view, this means that the first question above is solved now. Regarding the second one, the conjecture assumption in our main theorems can be removed entirely due to the lower bound of ${\tt Trace}_4$ obtained in\cite{shepherd2010quantum}. 
This lower bound is not tight, but strong enough for our quantum-classical separation.
Resolving  Conjecture \ref{conj:trace} is still an interesting open problem for general $k$, even for $k=4$.

\section*{Acknowledgements}

ZJ is supported by the National Key Research Project of China (Grant No. 2023YFA1009403), the National Natural Science Foundation of China (Grant No. 12347104), and Beijing Science and Technology Planning Project (Grant No. Z25110100810000).
CS and YZ are supported by the National Key Research Project of China (Grant No. 2023YFA1009403 and 2025YFA1017200).
YZ is also supported by the European Research Council (ERC) via the Starting Grant QuanThermal (101222179).
TL and XZ are supported by the National Natural Science Foundation of China (Grant No. 62372006).
CS would like to thank Kewen Wu for helpful discussions on Conjecture \ref{conj:trace} and for suggesting the paper \cite{shepherd2010quantum}.

\appendix

\section{Proof of Proposition \ref{prop for LSIP}}
\label{appA}

This claim shows that in the Chebyshev Equioscillation Theorem, $d = \widetilde{\deg}_E(f)$.
We now prove this claim.
Note that the dual form of \eqref{original LP:discretisation} is 
\bea
\label{original LP:discretisation-dual}
\max_{h_i} && \sum_{i\in[{d+2}]} f(x_i) h_i \\
\text{s.t.} 
&& \sum_{i\in[{d+2}]} h_i x_i^{k} = 0, \quad \forall k\in\{0,1,\ldots,d\}, \label{dual-eq1} \\
&& \sum_{i\in[{d+2}]} |h_i| = 1. \label{dual-eq2}
\eea
The dual solution is 
\be
\label{optimal solution}
h_i = \frac{\alpha}{x_i\prod_{j\neq i} (x_i-x_j)}, \quad i\in[{d+2}],
\ee
where $\alpha \in \mathbb{R}$ is a free parameter, which is determined by the constraint (\ref{dual-eq2}).

To recover the optimal solution of the primal LP,  we assume that $g(x) = \sum_{k=0}^d g_k x^k$.  Let $g_k=g_k^+-g_k^-$ with $g_k^+, g_k^-\geq 0$. Then in matrix form, \eqref{original LP:discretisatione-q1} is
\be \label{LP-matrix from}
\begin{pmatrix}
1 & x_1^0 & \cdots & x_1^d & -x_1^0 & \cdots & -x_1^d \\
& & & \cdots\cdots\cdots \\
1 & x_{d+2}^0 & \cdots & x_{d+2}^d & -x_{d+2}^0 & \cdots & -x_{d+2}^d \\
1 & -x_1^0 & \cdots & -x_1^d & x_1^0 & \cdots & x_1^d \\
& & & \cdots\cdots\cdots \\
1 & -x_{d+2}^0 & \cdots & -x_{d+2}^d & x_{d+2}^0 & \cdots & x_{d+2}^d 
\end{pmatrix}
\begin{pmatrix}
\delta \\
g_0^+ \\
\cdots \\
g_{d}^+ \\
g_0^- \\
\cdots \\
g_{d}^- 
\end{pmatrix}
\geq 
\begin{pmatrix}
f(x_1) \\
\cdots \\
f(x_{d+2})  \\
-f(x_1) \\
\cdots \\
-f(x_{d+2})
\end{pmatrix}.
\ee
The objective function can be written as
\be
\delta = 
(1,0,\ldots,0,0,\ldots,0) \begin{pmatrix}
\delta \\
g_0^+ \\
\cdots \\
g_{d}^+ \\
g_0^- \\
\cdots \\
g_{d}^- 
\end{pmatrix}.
\ee
Therefore, the dual form has an objective function
\be
\sum_{i=1}^{d+2} f(x_i) (h_i^+ - h_i^-).
\ee
The constraints are described by the following 
\be
\begin{pmatrix}
1 & \cdots & 1 & 1 & \cdots & 1 \\
x_1^0 & \cdots & x_{\ell+2}^0 & -x_1^0 & \cdots & -x_{\ell+2}^0 \\
& &  \cdots\cdots\cdots \\
x_1^d & \cdots & x_{\ell+2}^d & -x_1^d & \cdots & -x_{\ell+2}^d \\
-x_1^0 & \cdots & -x_{\ell+2}^0 & x_1^0 & \cdots & x_{\ell+2}^0 \\
& &  \cdots\cdots\cdots \\
-x_1^d & \cdots & -x_{\ell+2}^d & x_1^d & \cdots & x_{\ell+2}^d
\end{pmatrix}
\begin{pmatrix}
h_1^+ \\
\cdots \\
h_{d+2}^+ \\
h_1^- \\
\cdots \\
h_{d+2}^- \\
\end{pmatrix} \leq 
\begin{pmatrix}
1 \\
0 \\
\cdots \\
0 \\
0 \\
\cdots \\
0 \\
\end{pmatrix}.
\ee
Namely,
\bea
&& \sum_{i=1}^{d+2} (h_i^+ + h_i^-) \leq 1 , \\
&& \sum_{i=1}^{d+2} (h_i^+ - h_i^-) x_i^k = 0.
\eea
Denote $h_i = h_i^+-h_i^-$, then we obtain the dual LP \eqref{original LP:discretisation-dual}.

WLOG, we assume that $x_1>\cdots>x_{d+2}>0$ and $\alpha>0$, then
\be
\begin{cases} \vspace{.2cm}
\displaystyle 
h_i^+ = \frac{\alpha}{x_i\prod_{j\neq i} (x_i^2-x_j^2)}, \quad h_i^- =0, & i \text{ is odd} \\
\displaystyle 
h_i^+ =0,  \quad h_i^- = -\frac{\alpha}{x_i\prod_{j\neq i} (x_i^2-x_j^2)}, & i \text{ is even} 
\end{cases}
\ee
From the complementary slackness between LP and dual LP, we know that if $h_i^+>0$ (or $h_i^->0$), then the corresponding constraint of the primal LP is tight, i.e., it is an equality in \eqref{LP-matrix from}. We now assume $d+2=2m$ is even for convenience. Then we obtain the following linear system:
\be 
\begin{pmatrix}
1 & x_1^0 & \cdots & x_1^d & -x_1^0 & \cdots & -x_1^d \\
& & & \cdots\cdots\cdots \\
1 & x_{2m-1}^0 & \cdots & x_{2m-1}^d & -x_{2m-1}^0 & \cdots & -x_{2m-1}^d\\
1 & -x_2^0 & \cdots & -x_2^d & x_2^0 & \cdots & x_2^d \\
& & & \cdots\cdots\cdots \\
1 & - x_{2m}^0 & \cdots & -x_{2m}^d & x_{2m}^0 & \cdots & x_{2m}^d
\end{pmatrix}
\begin{pmatrix}
\delta \\
g_0^+ \\
\cdots \\
g_{d}^+ \\
g_0^- \\
\cdots \\
g_{d}^- 
\end{pmatrix}
=
\begin{pmatrix}
f(x_1) \\
\cdots \\
f(x_{2m-1})  \\
-f(x_2) \\
\cdots \\
-f(x_{2m} )
\end{pmatrix}.
\ee
Using $g_j=g_j^+ - g_j^-$, we can simplify the above as follows:
\be  \label{final LP}
\begin{pmatrix}
1 & x_1^0 & x_1^2 & \cdots & x_1^d  \\
& & & \cdots\cdots\cdots \\
1 & x_{2m-1}^0 & x_{2m-1}^1 & \cdots & x_{2m-1}^d  \\
1 & -x_2^0 & -x_2^1 & \cdots & -x_2^d  \\
& & & \cdots\cdots\cdots \\
1 & -x_{2m}^0 & -x_{2m}^1 & \cdots & -x_{2m}^d
\end{pmatrix}
\begin{pmatrix}
\delta \\
g_0 \\
g_1 \\
\vdots \\
g_{d} \\
\end{pmatrix}
=
\begin{pmatrix}
f(x_1) \\
\cdots \\
f(x_{2m-1})  \\
-f(x_2) \\
\cdots \\
-f(x_{2m} )
\end{pmatrix}.
\ee
The above defines a linear system of $2m \times 2m$ with a unique solution. For $g(x)=g_0+g_1x+\cdots+g_{d}x^{d}$, we have $g(x_i) = f(x_i) + (-1)^i \delta$ from \eqref{final LP}. This coincides with the Chebyshev Equioscillation Theorem, and the Remez algorithm can find the points efficiently.

\begin{rmk}
In \cite[Theorem 1.10 of the arXiv version]{montanaro2024quantum}, it was proved that the ``entry estimation problem", i.e., estimating an entry of a function of a matrix, is BQP-complete. This result depends on the construction of a certain tridiagonal matrix from its spectra. The efficiency of the construction, which further depends on the finding of $x_1,\ldots,x_{d+2}$ described in the LP \eqref{original LP:discretisation}, is not discussed in that paper. From Proposition \ref{prop for LSIP}, we see that these points can be computed efficiently by the  Remez algorithm. This guarantees the efficiency of the construction of the tridiagonal matrix in the proof of the BQP-completeness.
\end{rmk}

\bibliographystyle{plain}
\bibliography{ref.bib}

\end{document}